\newcolumntype{M}[1]{>{\centering\arraybackslash}m{#1}}
\newcolumntype{L}[1]{>{\raggedright\arraybackslash}m{#1}}
\newcolumntype{R}[1]{>{\raggedleft\arraybackslash}m{#1}}
\newcolumntype{N}{@{}m{0pt}@{}}
 \newtheorem{theorem}{Theorem}
  \newtheorem{definition}{Definition}
 \newproof{proof}{Proof}
\begin{document}
\let\WriteBookmarks\relax
\def\floatpagepagefraction{1}
\def\textpagefraction{.001}
\shorttitle{Proactive Defense against Advanced Persistent Threats}
\shortauthors{L. Huang and Q. Zhu}

\title [mode = title]{
A Dynamic Games Approach to Proactive Defense Strategies against Advanced Persistent Threats in Cyber-Physical Systems
}                      
%
%

\author[1]{Linan Huang}[
                        orcid=00000003-15918749
                        ]
\ead{lh2328@nyu.edu}
%
%
%
%
\author[1]{Quanyan Zhu}[%
   orcid=00000002-00082953
   ]
\ead{qz494@nyu.edu} 
%
%
\address[1]{Department of Electrical and Computer Engineering, New York University,
Brooklyn, NY, 11201}
%
%
%
%

\begin{abstract}
Advanced Persistent Threats (APTs) have recently emerged as a significant security challenge for a cyber-physical system due to their stealthy, dynamic and adaptive nature. Proactive dynamic defenses provide a strategic and holistic security mechanism to increase the costs of attacks and mitigate the risks. This work proposes a dynamic game framework to model a long-term interaction between a stealthy attacker and a proactive defender. 
The stealthy and deceptive behaviors are captured by the multi-stage game of incomplete information, where each player has his own private information unknown to the other. Both players act strategically according to their beliefs which are formed by the multi-stage observation and learning. The perfect Bayesian Nash equilibrium provides a useful prediction of both players' policies because no players benefit from unilateral deviations from the equilibrium. We propose an iterative algorithm to compute the perfect Bayesian Nash equilibrium and use the Tennessee Eastman process as a benchmark case study. Our numerical experiment corroborates the analytical results and provides further insights into the design of proactive defense-in-depth strategies. 
\end{abstract}

%

\begin{keywords}
advanced persistent threats \sep
 defense in depth\sep
  proactive defense\sep
   industrial control system security\sep
    cyber deception\sep
     multi-stage Bayesian game\sep
      perfect Bayesian Nash equilibrium\sep
       Tennessee Eastman process
\end{keywords}

\maketitle

\section{Introduction}
The recent advances in automation technologies, 5G networks, and cloud services have accelerated the development of cyber-physical systems (CPSs) by integrating computing and communication functionalities with components in the physical world.
Cyber integration increases the operational efficiency of the physical system, yet it also creates additional security vulnerabilities. 
First, the increased connectivity and openness have expanded the attack surface \textcolor{black}{and enabled attackers to leverage vulnerabilities from multiple system components to launch a sequence of stealthy attacks.
}
Second, the component heterogeneity, the functionality complexity, and the \textcolor{black}{dimensionality} of cyber-physical systems have created many zero-day vulnerabilities, which make the defense \textcolor{black}{arduous and costly}. 

Advanced Persistent Threats (APTs) are \textcolor{black}{a class} of emerging threats for cyber-physical systems \textcolor{black}{with the following distinct features}. 
Unlike \textcolor{black}{opportunistic attackers who spray and pray}, APTs have specific targets and sufficient knowledge of the system architecture, valuable assets, and even defense strategies.  Attackers can tailor their strategies and invalidate cryptography, firewalls, and intrusion detection systems.  
Unlike \textcolor{black}{myopic attackers who smash and grab, APTs are stealthy and can disguise themselves as legitimate users for a long sojourn in the victim's system.}




\textcolor{black}{A few security researchers and experts have proposed APT models in which the entire intrusion process is divided into a sequence of phases, such as Lockheed-Martin's Cyber Kill Chain (see \cite{hutchins2011intelligence}), MITRE's ATT\&CK (see \cite{ATTACK}), the NSA/CSS technical cyber threat framework (see  \cite{DHS2018}), and the ones surveyed in \cite{messaoud2016advanced}.} 
Fig. \ref{fig: APT} illustrates the multi-stage structure of APTs.  During the reconnaissance phase, a threat actor \textcolor{black}{collects} open-source or internal  intelligence to identify valuable targets. 
After the attacker obtains a private key and establishes a foothold, he escalates privilege, propagates laterally in the cyber network, and eventually either accesses confidential information or inflicts physical damage. 
\textcolor{black}{Static standalone defense on a physical system cannot deter attacks originated from a cyber network.} 

\textcolor{black}{The multi-phase feature of APTs results in the concept of Defense in Depth (DiD), i.e.,
multi-stage cross-layer defense policies.  A system defender should adopt defensive countermeasures across the phases of APTs and holistically consider interconnections and interdependencies among these layers. 
To formally describe the interaction between an APT attacker and a defender with the defense-in-depth strategy, we map the sequential phases of APTs into a game of multiple stages.} Each stage describes a local interaction between the attacker and the defender where the outcome leads to the next stage of interactions. 
The goal of the attacker is to \textcolor{black}{stealthily} reach the targeted physical or informational assets while the defender aims to take defensive actions at multiple phases to thwart the attack or reduce its impact. 

\begin{figure*}
\centering
\includegraphics[width=0.95 \textwidth]{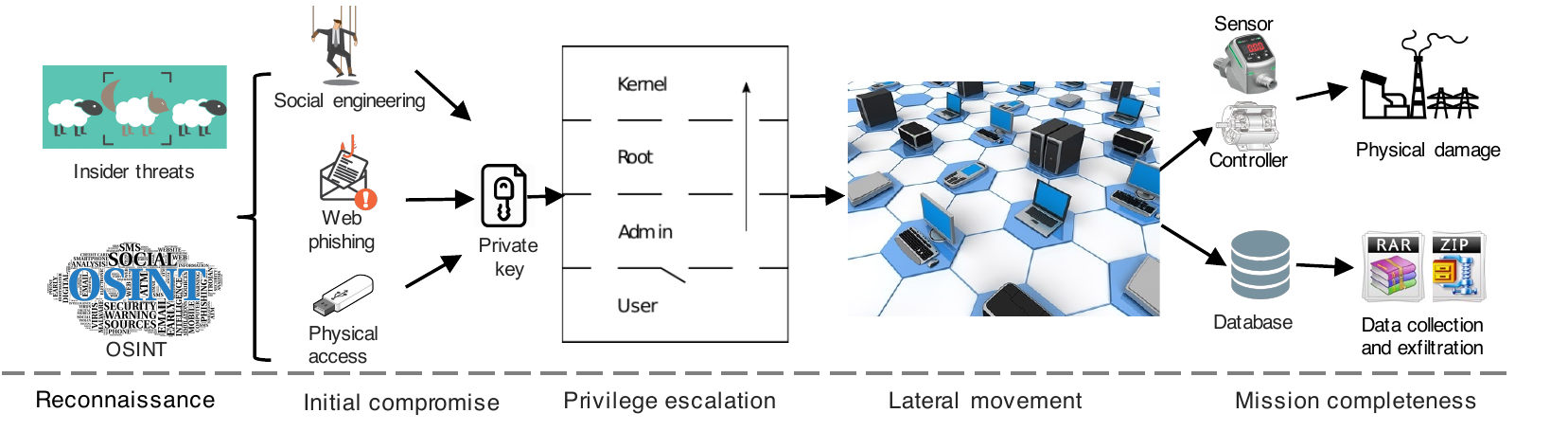}
\caption{
\textcolor{black}{An illustrate example} of the multi-stage structure of APTs. The multi-stage attack is composed of reconnaissance, initial compromise, privilege escalation, lateral movement, and mission execution.  \textcolor{black}{An attack originated from an early-stage cyber network can lead to damage in a physical system. }
 \label{fig: APT}}
\end{figure*}

\textcolor{black}{Detecting APTs timely (i.e., before attackers have reached the final stage) and effectively (i.e., with a low rate of false alarms and missed detections) is still an open problem  due to their stealthy and deceptive characteristics}. As reported in \cite{Jeff2000}, US companies in $2018$ have taken an average of $197$ and $69$ days, respectively, to detect and contain a data breach. \textcolor{black}{Stuxnet-like APT attacks can conceal themselves in a critical industrial system  for years and inconspicuously increase the failure probability of physical components. 
Due to the insufficiency of timely and effective detection systems for APTs, the defender remains uncertain about the user's type, i.e., either legitimate or adversarial, throughout stages. 
To prepare for the potential APT attacks, the defender needs to adopt precautions and proactive defense measures, which may also impair the user experience and reduce the utility of a legitimate user. Therefore, the defender needs to strategically balance the tradeoff between security and usability when the user's type remains private.   
}

\textcolor{black}{In this work, we model the private information of the user's type as a random variable following the work of \cite{harsanyi1967games}. 
Under the same defense action, the behavior and the utility of a user depend on whether his type is legitimate or adversarial. 
To make secure and usable decisions under incomplete information, the defender forms a belief on the user's type and updates the belief via the Bayesian rule based on the information acquired at each stage. 
For example,  throughout the phases of an APT, detection systems can generate many alerts based on suspicious user activities. 
Although these alerts do not directly reveal the user's type, a defender can use them to reduce the uncertainty on the user's type and better determine her defense-in-depth strategies at multiple stages.}

\textcolor{black}{Defensive deception provides an alternative perspective to bring uncertainty to the attacker and tilt the information asymmetry. 
We classify a defender into different levels of sophistication based on factors such as her level of security awareness, detection techniques she have adopted, and the completeness of her virus signature database. 
A sophisticated defender has a higher success rate of detecting adversarial behaviors. Thus, the behavior of an attacker depends on the type of defender that he interacts with. For example, the attacker may remain stealthy when he interacts with a sophisticated defender but behaves more aggressively when interacting with a primitive defender. 
As the attacker has incomplete information regarding the defender's type, he needs to form a belief and continuously updates it based on his observation of the defender's actions. In this way, the attacker can optimally decide whether, when, and to what extent, to behave aggressively or conservatively.  
}

\textcolor{black}{To this end, we also use a random variable to characterize the private information of the defender's type. As both players have incomplete information regarding the other player's type and they make sequential decisions across multiple stages, we extend the classical static Bayesian game to a multi-stage nonzero-sum game with two-sided incomplete information. }
Both players act strategically according to their beliefs to maximize their utilities. The Perfect Bayesian Nash Equilibrium (PBNE) provides a useful prediction of their policies at every stage for each type since no players can benefit from unilateral deviations at the equilibrium. Computing the PBNE is challenging due to the coupling between the forward belief update and the backward policy computation. 
We first formulate a mathematical programming problem to compute the equilibrium policy pair under a given belief for the one-stage Bayesian game. \textcolor{black}{For multi-stage Bayesian games, we compute the equilibrium policy pair under a given sequence of beliefs by constructing a sequence of nested mathematical programming problems.} 
Finally, we combine these programs with the Bayesian update and propose an efficient algorithm to compute the PBNE. 

The proposed modeling and computational methods are shown to be capable of hardening the security of a broad class of supervisory control and data acquisition (SCADA) systems. 
This work leverages the Tennessee Eastman  process as a case study of proactive \textcolor{black}{defense-in-depth strategies against the APT attackers who} can infiltrate into the cyber network through \textcolor{black}{phishing emails}, escalate privileges \textcolor{black}{through the process injection}, tamper the sensor reading \textcolor{black}{through malicious encrypted communication}, and \textcolor{black}{eventually} decrease the operational efficiency of the Tennessee Eastman process without triggering the alarm. 
The dynamic game approach offers a quantitative way to assess the risks and provides a systematic and computational mechanism to develop proactive and strategic defenses across multiple cyber and physical stages. 
\textcolor{black}{Based on the computation result of the case study, we obtain the following insights to guide the design of practical defense systems.} 
\begin{itemize}
    \item Defense at the final stage is usually too late to be effective when APTs have been well-prepared and ready to attack. 
We need to take precautions and proactive responses in the cyber stages when the attack remains ``under the radar" so that the attacker becomes less dominant when they reach the final stage. 
    \item The online learning capability of the defender plays an important role in detecting the adversarial deception and tilting the information asymmetry. It increases the probability of identifying the hidden information from the observable behaviors, threatens the stealthy attacker to take more conservative actions, and hence reduces the attack loss. 
    \item Third, defensive deception techniques are shown to be effective to \textcolor{black}{introduce uncertainty to attackers, increase their learning costs}, and hence reduce the probability of successful attacks. 
    \textcolor{black}{Those techniques may introduce a negative impact on legitimate users. However, a delicate balance between security and usability can be achieved under proper designs.}   
\end{itemize}

\subsection{Related Work}
    \textcolor{black}{One well-known industrial solution to APT defense is the ATT\&CK  matrix (see \cite{ATTACK}). It  illustrates disclosed attack methods and possible detection and mitigation countermeasures at different phases of APTs. 
However, as argued in \cite{endgame}, it lacks a prioritization to list all possible attack methods in one matrix. A lot of false alarms can arise as legitimate users can also generate a majority of activities in the ATT\&CK matrix. Besides, despite a persistent update, the matrix is far from complete and can lead to miss detection.} 

\textcolor{black}{
Many papers have attempted to deal with the above two challenges, i.e., false alarms and miss detection. 
To prevent security specialists from overwhelming alarms,  \cite{marchetti2016analysis} has analyzed high volumes of network traffic to reveal weak signals of suspect APT activities and ranked these signals based on the computation of suspiciousness scores. 
To identify attacks that exploit zero-day vulnerabilities or other unknown attack techniques, \cite{friedberg2015combating} has managed to learn and maintain a white-list of normal system behaviors and report all actions that are not on the white-list. 
There is also a rich literature on detecting essential components of an APT attack such as malicious PDF files in phishing emails (see \cite{nissim2015detection}), malicious SSL certificate during command and control communications (see \cite{ghafir2017malicious}), and data leakage at the final stage of the APT campaign (see \cite{sigholm2013towards}). 
These works have focused on a static detection of abnormal behaviors in one specific stage but had not taken into account the correlation among multiple phases of APTs. 
\cite{ghafir2018detection} has managed to build a framework to correlate alerts across multiple phases of APTs based on machine learning techniques so that all those alerts can be attributed to a single APT scenario. 
\cite{ghafir2019hidden} has constructed a correlation framework to link elementary alerts to the same APT campaign and applied the hidden Markov model to determine the most likely sequence of APT stages.  
}

\textcolor{black}{An alternative perspective from the aforementioned APT detection frameworks is to address how to respond to and mitigate potential attacks.
\cite{li2018defending} has captured the dynamic state evolution through a network-based  epidemic model and provided both prevention and recovery strategies for defenders based on optimal control approaches. 
Since APTs are controlled by human experts and can act strategically, the defender's response should adapt to the potential change of APT behaviors. Thus, decision and game theory becomes a natural quantitative framework to capture constraints on defense actions, attack consequences, and attackers' incentives.
 } 
 \cite{van2013flipit} has proposed \textit{FlipIt} game to model the key leakage under APTs as a private takeover between the system operator and the attacker. 
Many works have integrated \textit{FlipIt} with other components for the APT defense such as the signaling game to defend cloud service (see \cite{pawlick2018istrict}), an additional player to model the insider threats (see \cite{feng2015stealthy}), and a system of multiple nodes under limited resources (see \cite{zhang2015game}). 
The \textit{FlipIt} has described a high-level abstraction of the attacker's behavior to understand optimal timing for resource allocations. However, for our purpose of developing multi-stage defense policies, we need to provide a finer-grained model that can capture the dynamic interactions between players of different types across multiple stages. 
Our game framework models heterogeneous adversarial and defensive behaviors at multiple stages, allowing the prediction of attack moves and the estimation of losses using the equilibrium analysis.

\textcolor{black}{Other security game models such as \cite{zhu2018multi, yang2018effective, huang2017large} have provided dynamic risk management frameworks that allow the defender to response and repair effectively. 
In particular,  to model the multi-stage structure of APTs, \cite{zhu2018multi} has developed a sequence of heterogeneous game phases, i.e., a static Bayesian game for spear phishing, a nested game for penetration, and a finite zero-sum game for the final stage of physical-layer infrastructure protection. 
However, most of these security game frameworks have assumed complete information.
Our framework explicitly models the incomplete information across the entire phases of APTs and introduces their belief updates based on multi-stage information for making long-term strategic decisions. 
}



Cyber deception is an emerging research area. 
Games of incomplete information are natural frameworks to model the uncertainty and misinformation introduced by cyber deceptions. 
Previous works mainly focus on adversarial deceptions where the deceiver is the attacker. 
For example, strategic attackers in \cite{NguyenDeceptionIF} manipulate the attack data to mislead the defender in finitely repeated security games. 
A defender, on the other hand, can also initiate defensive deception techniques such as perturbations via external noises, obfuscations via revealing useless information, or honeypot deployments as shown in \cite{pawlick2017game}. 
\textcolor{black}{\cite{horak2017manipulating} proposes a framework to engage with attackers strategically to deceive them against the attack goal without their awareness.} 
A honeypot which appears to contain valuable information can lure attackers into isolation and surveillance. 
\textcolor{black}{\cite{la2016deceptive} has used a Bayesian game to model deceptive attacks and defenses in a honeypot-enabled network in the envisioned Internet of Things. 
Besides detection, a honeypot can also be used to obtain high-level indicators of compromise under a proper engagement policy as shown in \cite{engagement} where several security metrics are investigated and the optimal engagement policy is learned by reinforcement learning.} 
A system can also disguise a real asset as a honeypot to evade attacks as shown in \cite{rowe2007defending}. 
Our work considers a dynamic Bayesian game with double-sided incomplete information to incorporate both adversarial and defensive deceptions.  

The preliminary versions of this work (see \cite{huang2018analysis, huang2019adaptive}) have considered a dynamic game with one-sided incomplete information where attackers disguise themselves as legitimate users. 
This work extends the framework to a two-sided incomplete information structure where primitive systems can also disguise themselves as sophisticated systems. 
The new framework enables us to jointly investigate deceptions adopted by both attackers and defenders, and strategically design defensive deceptions to counter adversarial ones. 
We also develop new methodologies to address the challenge of the coupled belief update  
in a generalize setting without the previous assumption of the beta-binomial conjugate pair. 
In the case study, we investigate heterogeneous actions and cyber stages such as web phishing and privilege escalation, whose utilities are no longer negligible. 
Moreover, we leverage the Tennessee Eastman process with new performance metric and attack models to validate the efficacy of the proposed proactive defense-in-depth strategies, the Bayesian learning, and the defensive deception. 

\subsection{Organization of the Paper}
\textcolor{black}{We summarize notations, variables, and acronyms in Table. \ref{table:notation} for readers' convenience.} 
\begin{table}[]
\centering
\caption{Summary of notations, variables, and acronyms. 
\label{table:notation}}
\begin{tabular}{ll}
\hline
\textbf{General Notation}      & \textbf{Meaning}      \\ \hline
$A:=B$    & $A$ is defined as $B$     \\
$\Pr$      & Probability  \\
$f: A \mapsto B$      & A function or a mapping $f$  from domain $A$ to codomain $B$      \\ 
$\mathbb{E}_{a\sim A}[f(a)]$      & Expectation of $f(a)$ over random variable $a$ whose distribution is $A$   \\
$\mathbb{R}$      & Set of real numbers      \\
$|A|$     & The cardinality of set $A$   \\
$a\sim A$     & Random variable $a$ follows probability distribution $A$   \\
$\mathbf{1}_{\{x=y\}}$     & Indicator function which equals one when $x=y$, and zero otherwise   \\
$\{a_1,\cdots,a_n\}$ & Set with $n$ elements  $a_1,\cdots,a_n$\\
$B \setminus A$ & Set of elements in $B$ but not in $A$ 
\\
     &       \\
     
     \hline
\textbf{Variable} & \textbf{Meaning} \\ \hline
$i,j\in \{1,2\}$      & Index for players in the game: $i,j=1$ for the defender and $i,j=2$  for the user      \\
$\Theta_i$      & Set of all possible types of player   $i\in \{1,2\}$    \\
$\Delta (\Theta_i)$     & Space of probability distributions over type set $\Theta_i$ of player $i\in \{1,2\}$  \\
$\theta_i\in \Theta_i$      & Type of player  $i\in \{1,2\}$     \\
$\theta_1^H$ (resp. $\theta_1^L $)        & The defender is sophisticated (resp. primitive)     \\
$\theta_2^b$ (resp. $\theta_2^g$)       & The user is adversarial (resp. legitimate)     \\
$K$      & Total number of stages     \\
$k\in \{0,1,\cdots, K\}$      & Stage index       \\
$k_0\in \{0,1,\cdots, K\}$      & Index for the initial stage      \\
$A_i^k$      & Set of all possible actions of player $i\in \{1,2\}$ at stage  $k\in \{0,1,\cdots, K\}$       \\
$\Delta (A_i^k)$     & Space of probability distributions over the action set $A_i^k$ 
\\
$a_i^k\in A_i^k$      & Action of player $i\in \{1,2\}$ at stage  $k\in \{0,1,\cdots, K\}$       \\
$h^k, H^k$     & Action history and the set of all possible action histories at stage $k\in \{0,1,\cdots, K\}$      \\
$x^k, X^k$     & State and the set of all possible states at stage $k\in \{0,1,\cdots, K\}$      \\
$f^k$   & State transition function at stage $k$, i.e., $x^{k+1}=f^k(x^k,a_1^k,a_2^k)$ \\
$l_i^k, L_i^k$ & Available Information and set of all available information for player $i$ at stage  $k$ \\
$\sigma_i^k, \Sigma_i^k$      & Behavioral strategy and the set of all behavioral strategies   for player $i$ at stage  $k$     \\
$\sigma_i^k(a_i^k | l_i^k)$      &  Probability of player $i$ taking action $a_i^k$ at stage $k$ based on the available information $l_i^k$  \\
$\sigma_i^{k_0:K}$
 & Player $i$'s behavioral strategies from stage $k_0$ to $K$\\
 $\sigma_i^{*,k_0:K}$ ($\sigma_i^{*,K}:=\sigma_i^{*,K:K}$)  & Player $i$'s behavioral  strategies from stage $k_0$ to $K$ at the equilibrium \\
$b_i^k: L_i^k \mapsto \Delta (\Theta_j)$      & Player $i$'s belief on the other player $j$'s type at stage $k$ based on the available information  \\
$b_i^k(\theta_j | l_i^k)$      & Probability of player $j$ being type $\theta_j$ when player $i$ observes information $l_i^k$ at stage $k$  \\
\multirow{2}{*}{$\bar{J}_i^k(x^k,a_1^k,a_2^k,\theta_1,\theta_2, w_i^k)$} & Player $i$'s stage utility received at stage $k$ when the state is $x^k$, player $i$ takes action $a_i^k$,\\ 
&  \quad\quad \quad player $i$'s type is $\theta_i$, and the noise is $w_i^k$  \\
${J}_i^k(x^k,a_1^k,a_2^k,\theta_1,\theta_2 )$      & Player $i$'s expected stage utility received at stage $k$ with the input of  $x^k,a_1^k,a_2^k,\theta_1,\theta_2$\\
\multirow{2}{*}{${U}_i^{k_0:K}(\sigma_i^{k_0:K},\sigma_j^{k_0:K}, x^{k_0},\theta_i )$}      & Player $i$'s expected cumulative utility received from stage $k_0$ to $K$ when the initial state \\
& \quad\quad\quad is $x^{k_0}$, his/her type is $\theta_i $, and the multi-stage strategies of player $i$ are  $\sigma_i^{k_0:K}$\\
$V_i^k(x^k,\theta_i)$ & Player $i$'s value function at state $x^k$ when his/her type is $\theta_i$\\
     &       \\
\hline
\textbf{Acronym}      & \textbf{Meaning}      \\ \hline
APT(s)      & Advanced persistent threat(s)      \\
SBNE      & Static Bayesian Nash equilibrium       \\
DBNE      & Dynamic  Bayesian Nash equilibrium    \\
PBNE      & Perfect Bayesian Nash equilibrium     \\ \hline
\end{tabular}
\end{table}
We use pronoun `he' for the user and `she' for the defender throughout this paper. 
The rest of the paper is organized as follows. Section \ref{sec:model} introduces the multi-stage game with incomplete information and three equilibrium concepts are defined in Section \ref{sec:analysis}.
To compute these equilibria, we construct constrained optimization problems and an iterative algorithm in Section \ref{sec:computation}. 
A case study of Tennessee Eastman process under APTs  
is presented in Section \ref{sec: Casestudy} with results in Section \ref{sec:result}. Section \ref{sec:conclusion} concludes the paper.
\section{Dynamic Game Modelling of APT Attacks}
\label{sec:model}
\textcolor{black}{There are two players in the game, player $1$ is the user and player $2$ is the defender.} 
The stealthy, persistent, and deceptive features of APTs result in   incomplete information \textcolor{black}{of the user's type} to the defender. 
\textcolor{black}{We use a finite set $\Theta_2$ to accommodate all possible types of the user.} 
For example, we consider a binary type set for the case study in Section \ref{sec: Casestudy} and \ref{sec:result} where the user's type $\theta_2$ is either adversarial $\theta_2^b$ or legitimate $\theta_2^g$. 
\textcolor{black}{The APT attacker, i.e., the adversarial user,} disguises himself as the legitimate user, thus the defender does not know the type of the user. 
\textcolor{black}{The set of the user's type can also be non-binary and incorporate different APT groups when their attack tools and targeted assets are different (see \cite{FireEye2017})}. 

\textcolor{black}{The Defender can also be classified into different levels of sophistication based on various factors such as her level of security awareness, detection techniques she adopted, and the completeness of her virus signature database. 
The discrete type $\theta_1$ distinguishes defenders of different sophistication levels and all the possible type values constitute the defender's type set $\Theta_1$.} 
For example, in our case study, the defender's type $\theta_1$ is either  sophisticated $\theta_1^H$ or primitive $\theta_1^L$.
\textcolor{black}{The defender can apply defensive deception techniques and keep her type private to the user. 
We assume that both players' type sets are commonly known. Each player knows his/her own type, yet not the other player's type. Thus, each player $i$ should treat the other player's type as a random variable with an initial distribution $b_i^0$ and update the distribution to $b_i^k$ when obtaining new information at each stage $k$. We present the above belief update formally in Section \ref{sec:beliefupdate}.} 
\subsection{Multi-stage Transition}
\textcolor{black}{We formulate the interaction between the multi-stage APT attack and the cross-stage proactive defense into $K$ stages of sequential games with incomplete information, as shown in Fig. \ref{fig: logicflow}.}
\begin{figure}
\centering
\includegraphics[width=1 \textwidth]{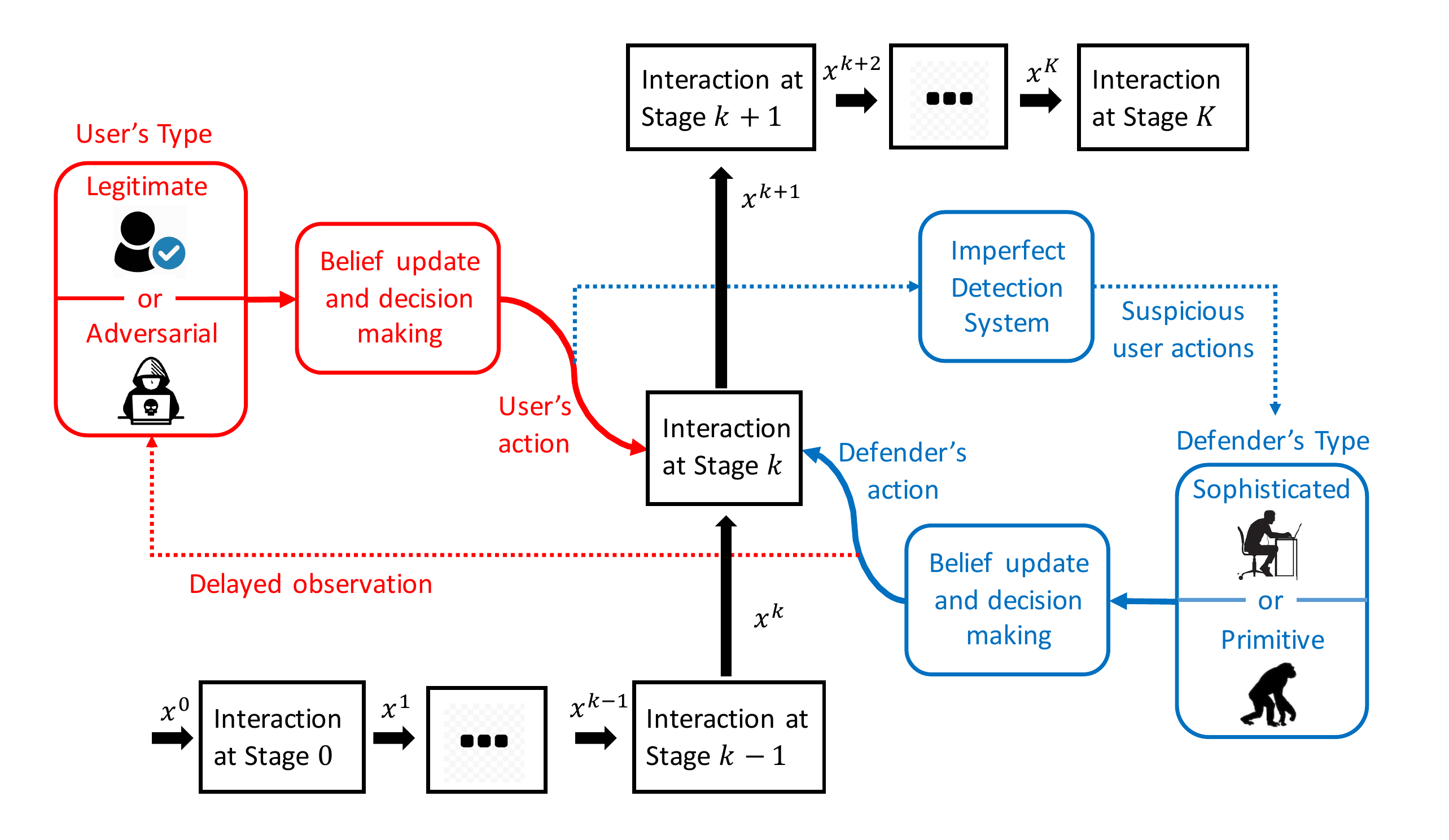}
\caption{
\textcolor{black}{A block diagram of applying the defense-in-depth approach against multi-stage APT attacks. We denote the user, the defender, and the system states in red, blue, and black, respectively. 
The defender interacts with the user from stage $0$ to stage $K$ in sequence where the output state of stage $k-1$ becomes the input state of stage $k$. 
At each stage $k$, the user observes the defender's actions at previous stages, forms a belief on the defender's type, and takes an action. At the same time, the defender makes decisions based on the output of an imperfect detection system. 
The dotted line means that the observation is not in real time, i.e., both players can only observe the previous-stage actions of the other player. }
 \label{fig: logicflow}}
\end{figure}
At each stage $k\in \{0,1,\cdots,K\}$, player $i\in \{1,2\}$ takes an action $a_i^k\in {A}_i^k$ from a finite and discrete set ${A}_i^k$. 
\textcolor{black}{An intrusion detection system generates alerts based on the user's actions. 
However, since legitimate users can also trigger these alerts, each alert  itself does not reveal the user's type. 
For example, an APT attacker uses the Tor network connection for data exfiltration, yet a legitimate user can also use it legally for the traffic confidentiality as shown in \cite{milajerdi2015composite}. 
Another example is that code obfuscation can be either used legitimately to prevent reverse engineering or illegally to conceal malicious JavaScript code from being recognized by signature-based detectors or human analysts as shown in \cite{nissim2015detection}.  
We assume that the user can observe the defender's stage-$k$ action at stage $k+1$.  
The observation of the defender's action at a single stage also does not reveal the defender's type.} 

\textcolor{black}{In this paper, each player obtains a one-stage delayed observation of the other player's actions, i.e., at each stage $k$, the action history available to both players is $h^k=\{a_1^0,\cdots,a_1^{k-1},a_2^0,\cdots,a_2^{k-1}\} \in {H}^k:=\prod_{i=1}^2 \prod_{\bar{k}=0}^{k-1} {A}_i^{\bar{k}}$.} 
 Given history $h^k$ at the current stage $k$, players at stage $k+1$ obtain an updated history
$h^{k+1}=h^k\cup \{a_1^k,a_2^k\}$ after the observation of both players' actions at stage $k$. 
At each stage $k$, we further define a state $x^k\in {X}^k$  which summarizes information about both players' actions in previous stages so that the initial state $x^0\in {X}^0$ and the history at stage $k$ uniquely determine ${x}^k$ through a known state transition function $f^k$, i.e., $x^{k+1}=f^k(x^k,a_1^k,a_2^k), \forall k\in \{0,1,\cdots,K-1\}$. 
\textcolor{black}{
States at different stages can have different meanings} such as \textcolor{black}{the reconnaissance outcome}, the user's location,  the privilege level, and the sensor status.    


\subsection{Behavioral Strategy}
\textcolor{black}{
A defender should behave differently when interacting with  adversarial users and legitimate ones. The defensive measure should also vary for attackers who adopt different code families and tools. 
However, since the defender is uncertain about the user's type throughout the entire stages of games, she has to make judicious decisions at each stage to balance usability versus security. 
The user's action should also adapt to the type of the defender. For example, if the defender is primitive, an attacker prefers to take aggressive adversarial actions to achieve a quicker and low-cost compromise. However, if the defender is sophisticated and can detect the malware with  better accuracy, an attacker  has to take conservative actions to remain stealthy. 
Since the proactive defense actions across the entire stages can affect legitimate users, they also need to be designed to avoid collateral damage.} 

\textcolor{black}{Thus, the decision-making problem of the defender or the user boils down to the determination of} a behavioral strategy  $\sigma^k_i \in \Sigma^k_i:{L}^k_i \mapsto \Delta ({A}_i^k)$, i.e.,  player $i$ at each stage $k$ needs to \textcolor{black}{decide which action to take or take an action with what probability based on the information $l_i^k\in L_i^k$ available to him/her at stage $k$.}  We present two different information structures in Section \ref{sec:info1} and  \ref{sec:info3}. 
\textcolor{black}{
The strategy is called `behavioral' as the strategy depends on the information available at the time the players make their decisions. 
In this work, players are allowed to take \textit{mixed strategies}, thus the co-domain of the strategy function $\sigma^k_i$ is $\Delta ({A}_i^k)$, a probability distribution over the action space ${A}_i^k$.} 
 With a slight abuse of notation, we denote $\sigma^k_i(a_i^k| {l}_i^k)$ as the probability of player $i$ taking action $a_i^k\in {A}_i^k$ given the available information $l_i^k\in {L}_i^k$. The actual action of player $i$ taken at stage $k$, i.e., $a_i^k$, is a realization of the behavioral strategy $\sigma_i^k$. 
Note that the values of the other player's type $\theta_j$ and action $a_j^k$ are not observable for player $i$ at stage $k$, thus do not affect player $i$'s behavioral strategy $\sigma_i^k$, i.e., $\Pr({a}_i^k|{a}^k_{j},\theta_{j}, {l}_i^k)=\sigma_i^k({a}_i^k| {l}_i^k)$.  Therefore, $\sigma_1^k$ and $\sigma_2^k$ are conditionally independent, i.e., $\Pr({a}_i^k,{a}^k_{j}|{l}_i^k,{l}_j^k)=\sigma^k_i({a}_i^k| {l}_i^k)\sigma^k_{j}({a}^k_{j}| {l}_j^k)$. 

\subsection{Belief and Bayesian Update}
\label{sec:beliefupdate}
\textcolor{black}{To quantify the uncertainty of the other player's type throughout the entire stages}, each player $i$ forms a belief $b^k_i: {L}_i^k  \mapsto \Delta (\Theta_{j}), j\neq i$. 
Likewise, \textcolor{black}{$b^k_i(\theta_j| {l}_i^k)$ means that} given information $l_i^k\in {L}_i^k$ at stage $k$, player $i$ forms a belief that the other player $j$ is of type $\theta_j\in \Theta_j$ with probability $b^k_i(\theta_j| {l}_i^k)$. 
At the initial stage $k=0$, the only information \textcolor{black}{available to player $i$ is his/her own type, i.e., $l_i^0=\theta_i$.} 
We assume that player $i$ has a prior belief distribution $b_i^0$ based on the past experiences with the other player. 
If no previous experiences are available to player $i$, player $i$ can take the uniform distribution as an unbiased prior belief.  
\textcolor{black}{As each player $i$ obtains new information when arriving at the next stage, his or her belief can be updated using the Bayesian rule. 
 We present the Bayesian update under} two different information structures ${L}_i^k$ at stage $0<k\leq K$ in the following two subsections. 

\subsubsection{Timely Observations}
\label{sec:info1}
The most straightforward information structure is ${L}_i^k={H}^k\times \Theta_i$\textcolor{black}{, i.e., the information available to player $i$ at stage $k$ is the action history $h^k$ and player $i$'s own type $\theta_i$,} which leads to the belief update in \eqref{eq: history-dependent belief update}. 
\begin{equation}
\begin{split}
&{b}^{k+1}_i(\theta_{j}|h^{k}\cup \{a_i^k,a_j^k\}, \theta_i)=\allowbreak
\frac{   \sigma^k_i({a}_i^k| {h}^k, \theta_i)\sigma^k_{j}({a}^k_{j}| {h}^k,\theta_{j}) b_i^k(\theta_{j}| {h}^k, \theta_i)    }
{\sum_{\bar{\theta}_{j}\in \Theta_j} \sigma^k_i({a}_i^k| {h}^k, \theta_i)\sigma^k_{j}({a}^k_{j}| {h}^k,\bar{\theta}_{j}) b_i^k(\bar{\theta}_{j}| {h}^k, \theta_i)   }, i,j\in \{1,2\},j\neq i. 
\end{split}
\label{eq: history-dependent belief update}
\end{equation}
Here, player $i$ updates the belief ${b}^{k}_i$ based on the observation of the action $a_i^k,a_j^k$. When the denominator is $0$, the history $h^{k+1}$ is not reachable from $h^k$, and the Bayesian update does not apply. In this case, we let 
 ${b}^{k+1}_i(\theta_{j}|h^{k}\cup \{a_i^k,a_j^k\}, \theta_i):=b_i^0(\theta_j |\theta_i)$.

\subsubsection{Markov Belief}
\label{sec:info3}
\textcolor{black}{
If the information available to player $i$ at stage $k$ is the state value $x^k$ and player $i$'s own type $\theta_i$, then the information set is taken to be ${L}_i^k={X}^k\times \Theta_i$.} 
With the Markov property that $\Pr(x^{k+1}| \theta_{j}, x^k, \cdots,x^1,x^0, \theta_i)=\Pr(x^{k+1}| \theta_{j}, x^k, \theta_i)$, the Bayesian update between two consequent states is 
\begin{equation}
\label{eq: state-dependent belief update}
\begin{split}
&b_i^{k+1}(\theta_{j} | x^{k+1},\theta_i)=\allowbreak
\frac{\Pr(x^{k+1}| \theta_{j}, x^k,\theta_i)b_i^{k}(\theta_{j}|x^k,\theta_i)}{\sum_{\bar{\theta}_{j}\in \Theta_j} \Pr(x^{k+1}| \bar{\theta}_{j}, x^k,\theta_i)b_i^{k}(\bar{\theta}_{j}|x^k,\theta_i)}, i,j\in \{1,2\},j\neq i.
\end{split}
\end{equation}


With the conditional independence of $\sigma^k_1$ and $\sigma^k_2$, 
\begin{equation}
\label{eq: cluster}
\begin{split}
& \Pr(x^{k+1}| \theta_{j}, x^k,\theta_i)\allowbreak 
=\sum_{\{a^k_1,a^k_2\}\in \bar{A}^k}\sigma^k_1(a^k_1|x^k,\theta_1)\sigma^k_2(a^k_2|x^k,\theta_2), 
\end{split}
\end{equation}
 where  $\bar{A}^k:=\{a^k_1\in {A}^k_1,a^k_2\in {A}^k_2| x^{k+1}=f^k(x^k,a^k_1,a^k_2)\}$ contains all the action pairs that change the system state from $x^k$ to $x^{k+1}$.  
Equation \eqref{eq: cluster} shows that the Bayesian update in \eqref{eq: state-dependent belief update} can be obtained from \eqref{eq: history-dependent belief update} by clustering all the action pairs in set \textcolor{black}{$\bar{A}^k$}. 
Thus, the Markov belief update \eqref{eq: state-dependent belief update} can also be regarded as an approximation of \eqref{eq: history-dependent belief update} using action aggregations. 
Unlike the history set ${H}^k$, the dimension of the state set,  $|{X}^k|$,  does not grow with the number of stages. Hence, the Markov approximation significantly reduces the
 memory and computational complexity. 
The following sections adopt the Markov belief update. 


\subsection{Stage and Cumulative Utility}
\label{subsec: utility}
\textcolor{black}{
The player's utility can vary under the same action taken by different types of users or defenders. 
For example, the remote access from a legitimate teleworker brings a reward to the defender while the one from an adversarial user inflicts a loss.}
Therefore, at each stage $k$, player $i$'s stage utility $\bar{J}_i^k: {X}^k \times {A}_1^k \times {A}_2^k \times \Theta_1 \times \Theta_2 \times \mathbb{R} \mapsto \mathbb{R} $ can depend on both players' types and actions, the current state ${x}^k\in {X}^k$, and an external noise $w_i^k\in \mathbb{R}$ with a known probability density function $\varpi_i^k$. 
The noise term models unknown or uncontrolled factors that can affect the value of the stage utility. 
The existence of the external noise makes it impossible for player $i$, after reaching stage $k+1$,  to infer the value of the other player's type $\theta_j$ based on the knowledge of the input parameters $x^k , a^k_1 , a^k_2 , \theta_i$,  together with the output of the utility function $\bar{J}_i^k$ at stage $k$. 

We denote the expected stage utility as $J_i^k(x^k, a^k_1 , a^k_2 , \theta_1,\theta_2):=\mathbb{E}_{w_i^k\sim \varpi_i^k }[\bar{J}_i^k(x^k, a^k_1 , a^k_2 , \theta_1,\theta_2,w_i^k)], \forall x^k, a^k_1 , a^k_2 , \theta_1,\theta_2$. 
Given the  type $\theta_i\in \Theta_i$, the initial state $x^{k_0}\in {X}^{k_0}$, and both players' strategies $\sigma_i^{k_0:K}:=[\sigma^k_i(a_i^k|{x}^{k},\theta_i)]_{k=k_0,\cdots,K}\in \prod_{k=k_0}^K \Sigma_i^k$ from stage $k_0$ to $K$, we can determine the expected cumulative utility $U_i^{k_0:K}$ for player $i$, 
i.e., 
\begin{equation}
\label{eq: cumultive utility}
\begin{split}
&U^{k_0:K}_i(\sigma_i^{k_0:K},\sigma_{j}^{k_0:K}, x^{k_0},\theta_i) \allowbreak 
:=\sum_{k=k_0}^{K} \mathbb{E}_{\theta_{j}\sim b_i^k, a_i^k\sim \sigma_i^k,a_j^k \sim \sigma_{j}^k} [J_i^k({x}^k,a_1^k,a_{2}^k,\theta_1,\theta_{2})]\\
&=
\sum_{k=k_0}^K \sum_{\theta_{j}\in \Theta_{j}} {b}^{k}_i(\theta_{j}|{x}^k, \theta_i)     \sum_{a_{i}^k \in {A}_{i}^k} \sigma_{i}^k(a_{i}^k |{x}^k, \theta_{i}) \cdot \allowbreak \sum_{a_{j}^k \in {A}_{j}^k}\sigma_{j}^k(a_{j}^k |{x}^k, \theta_{j}) J_i^k({x}^k,a_1^k,a_{2}^k,\theta_1,\theta_{2}),\  j\neq i. 
\end{split}
\end{equation}


\section{PBNE and Dynamic Programming}
\label{sec:analysis}
The user and the defender use the Bayesian update to reduce their uncertainties on the other player's type.  Since their actions affect the belief update, both players at each stage should optimize their expected cumulative utilities concerning 
the updated beliefs \textcolor{black}{at the future stages}, which leads to the Perfect Bayesian Nash Equilibrium (PBNE) in Definition \ref{def: PBNE}. 


\begin{definition}
\label{def: PBNE}
Consider the two-person $K$-stage game with double-sided incomplete information (i.e., each player's type is not known to the other player), a sequence of beliefs $b_i^k, \forall k\in \{0,\cdots, K\}$, an expected cumulative utility $U^{0:K}_i$ in \eqref{eq: cumultive utility}, and a given scalar $\varepsilon\geq 0$. A sequence of strategies $\sigma_i^{*,0:K}\in \prod_{k=0}^K \Sigma_i^k$ is called 
$\varepsilon$-dynamic Bayesian Nash equilibrium for player $i$  if condition (C2) is satisfied. If condition (C1) is also satisfied, 
$\sigma_i^{*,0:K}$ is further called 
$\varepsilon$-perfect Bayesian Nash equilibrium. 
\begin{enumerate}
\item[{(C1)}] Belief consistency: under strategy pair $(\sigma_1^{*,0:K}, \sigma_2^{*,0:K})$, each player's belief $b_i^{k}$ at each stage $k=0,\cdots, K$ satisfies 
\eqref{eq: state-dependent belief update}. 
\item[{(C2)}] Sequential rationality: for all given initial state $x^{k_0}\in {X}^{k_0}$ at every initial stage $k_0\in \{0,\cdots,K\}$, 
\begin{equation}
\label{eq: PBNE in def}
\begin{split}
&U_1^{k_0:K}(\sigma_1^{*,k_0:K},\sigma_{2}^{*,k_0:K}, {x}^{k_0},\theta_1)+\varepsilon
\geq \allowbreak
U_1^{k:K}(\sigma_1^{k_0:K},\sigma_{2}^{*,k_0:K}, {x}^{k_0},\theta_1),  \forall \sigma_{1}^{k_0:K}\in \prod_{k=0}^K \Sigma_1^k; 
\\
&U_2^{k_0:K}(\sigma_1^{*,k_0:K},\sigma_{2}^{*,k_0:K}, {x}^{k_0},\theta_2)+\varepsilon
\geq   \allowbreak
U_2^{k:K}(\sigma_1^{*,k_0:K},\sigma_{2}^{k_0:K}, {x}^{k_0},\theta_2), \forall \sigma_{2}^{k_0:K}\in \prod_{k=0}^K \Sigma_2^k.  
\end{split}
\end{equation}
\end{enumerate}
When  $\varepsilon = 0$, the two $\varepsilon$-equilibria are called Dynamic Bayesian Nash Equilibrium ({DBNE}) and  Perfect Bayesian Nash Equilibrium ({PBNE}), respectively. 
\end{definition}

\textcolor{black}{The belief consistency emphasizes that when strategic players make long-term decisions, they have to consider the impact of their actions on their opponent's beliefs at future stages.} 
The PBNE is a refinement of the DBNE with the additional requirement of the belief consistency property. 
When the horizon $K=0$, the multi-stage game of incomplete information defined in Section \ref{sec:model} degenerates to a one-stage (static) Bayesian game with the one-stage belief pairs $(b_1^K, b_2^K)$ and the solution concept of the DBNE/PBNE degenerates to the Static Bayesian Nash Equilibrium ({SBNE}) in Definition \ref{Def: BNE}. 

The sequential rationality property in \eqref{eq: PBNE in def} guarantees that unilateral deviations from the equilibrium at any states 
do not benefit the deviating player. Thus, the equilibrium strategy can be a reasonable prediction of both players' multi-stage behaviors. 
DBNE strategies have the property of \allowbreak 
\textit{strongly time consistency} because \eqref{eq: PBNE in def} holds for any possible initial states, even for states that are not on the equilibrium path, i.e., those states would not be visited under DBNE strategies. 
The \textit{strongly time consistency} property makes the DBNE adapt to unexpected changes. 
Solutions obtained by dynamic programming naturally satisfy \textit{strongly time consistency}. Hence, in the following, we introduce algorithms based on dynamic programming techniques.

Define the value function $V_i^{k_0}({x}^{k_0},\theta_i)  
:= \allowbreak 
U_i^{k_0: K}(\sigma_1^{*,k_0:K},\sigma_{2}^{*, k_0:K}, x^{k_0},\theta_i)$ as the utility-to-go from any initial stage $k_0\in \{0,\cdots,K\}$ under the DBNE strategy pair $(\sigma_1^{*,k_0:K},\sigma_{2}^{*, k_0:K})$. 
Then, at the final stage $K$, the value function for player $i \in \{1,2\}$  with type $\theta_i$ at state $x^K$ is
\begin{equation}
\label{eq: finalstage}
\begin{split}
V_i^K( x^{K},\theta_i)= \sup_{\sigma_i^{K} \in \Sigma_i^{K} }   
\mathbb{E}_{\theta_{j}\sim b_i^K, a_i^K\sim \sigma_i^{K},a_j^K \sim \sigma_{j}^{*,K} } \allowbreak
 [J_i^K({x}^K,a_1^K,a_{2}^K,\theta_1,\theta_{2})]. 
\end{split}
\end{equation}
 
For any feasible sequence of belief pairs $(b_1^k,b_2^k), \allowbreak
{k=0,   
\cdots,
K-1}$, we have the following recursive system equations  for player $i$ to find the equilibrium strategy pairs $(\sigma_1^{*,k},\sigma_2^{*,k})$ backwardly from stage $K-1$ to the initial stage $0$, i.e., $\forall k\in \{0,\cdots,K-1\}, \forall i,j\in \{1,2\}, j\neq i$, 
\begin{equation}
\label{eq: backward DP}
\begin{split} 
& V_i^{k}({x}^{k},\theta_i)= \sup_{\sigma_i^{k} \in \Sigma_i^{k} }  
 \mathbb{E}_{\theta_{j} \sim b_i^{k}, a_i^{k}\sim \sigma_i^{k},a_j^{k} \sim \sigma_{j}^{*,{k}}} \allowbreak
 [ V_i^{k+1}(f^k({x}^{k},a_1^{k},a_{2}^{k}),\theta_i) 
+  J_i^{k}(x^{k}, a_1^{k},a_{2}^{k},\theta_1,\theta_{2})]. 
 \end{split}
\end{equation}
If we assume a virtual termination value \allowbreak 
${V}_i^{K+1}(f^K(x^{K},a_1^K,a_2^K),\theta_i) \equiv 0$, we can obtain \eqref{eq: finalstage} by letting stage $k=K$ in \eqref{eq: backward DP}. 
The second term in \eqref{eq: backward DP} represents the immediate stage utility and the first term represents the expected utility under the future state $x^{k+1}= f^k({x}^{k},a_1^{k},a_{2}^{k}), \allowbreak 
k\in \{0,\cdots,K-1\}$. 
Since $a_i^k$ affects both terms, players should adopt a long-term perspective and avoid myopic behaviors to  balance between the immediate utility and the expected future utility.

%
%
%

\section{Computational Algorithms}
\label{sec:computation}

In \ref{sec: Static Bay}, we formulate a constrained optimization problem to compute the SBNE and $V_i^K$ for the one-stage game. 
In  \ref{sec: dynamic Bay}, we use the proposed optimization problem as building blocks to compute the DBNE and  $V_i^k, \forall k\in \{0,\cdots, K-1\}$. Finally, we propose an iterative algorithm to solve for the PBNE. 
Efficient algorithms to compute the PBNE lay a solid foundation to quantify the risk of cyber-physical attacks and guide the design of proactive defense-in-depth strategies. 


\subsection{One-Stage Bayesian Game and SBNE}
\label{sec: Static Bay}
Since both players' actions at the final stage $k=K$  only affect the immediate utility $J_i^K$ and there is no future state transition, we can  treat the final-stage game at each state $x^K\in {X}^K$ as an equivalent one-stage Bayesian game with the belief $b_i^K$ and obtain the SBNE. 
\begin{definition}
\label{Def: BNE}
A pair of mixed-strategies $(\sigma_1^{*,K}\in \Sigma_1^{K}, \sigma_2^{*,K}\in \Sigma_2^{K})$ is said to constitute a Static Bayesian Nash Equilibrium (SBNE) under the given belief pair $(b_1^K  ,b_2^K) $ and the state $x^K\in {X}^K$, if $\forall \theta_1\in \Theta_1, \theta_2\in \Theta_2$, 
\textcolor{black}{
\begin{equation}
\label{eq: BNE matrix form}
\begin{split}
& \mathbb{E}_{\theta_{2}\sim b_1^K, a_1^K\sim \sigma_1^{*,K},a_2^K \sim \sigma_{2}^{*,K} } \allowbreak [ J_1^K({x}^K,a_1^K,a_{2}^K,\theta_1,\theta_{2})]
\geq
\mathbb{E}_{\theta_{2}\sim b_1^K, a_1^K\sim \sigma_1^{K},a_2^K \sim \sigma_{2}^{*,K} } \allowbreak [ J_1^K({x}^K,a_1^K,a_{2}^K,\theta_1,\theta_{2})], \forall \sigma_1^{K}\in \Sigma_1^{K}; \\
& \mathbb{E}_{\theta_{1}\sim b_2^K, a_1^K\sim \sigma_1^{*,K},a_2^K \sim \sigma_{2}^{*,K} } \allowbreak [ J_2^K({x}^K,a_1^K,a_{2}^K,\theta_1,\theta_{2})]
\geq
\mathbb{E}_{\theta_{1}\sim b_2^K, a_1^K\sim \sigma_1^{*,K},a_2^K \sim \sigma_{2}^{K} } \allowbreak [ J_2^K({x}^K,a_1^K,a_{2}^K,\theta_1,\theta_{2})], \forall \sigma_2^{K}\in \Sigma_2^{K}. \\
\end{split}
\end{equation} 
}
\end{definition}
In Theorem \ref{Thm: double-sided program}, we propose a constrained optimization program ${C}^K$ to compute the SBNE. We suppress the superscript of $K$ without any ambiguity in one-stage games.

\begin{theorem}
\label{Thm: double-sided program}
A strategy pair  $(\sigma_1^{*}\in \Sigma_1, \sigma_2^*\in \Sigma_2)$ constitutes a SBNE to the one-stage bi-matrix Bayesian game $({J}_1, {J}_2)$ under private type $\theta_i\in \Theta_i, \forall i\in \{1,2\}$,  belief $b_i, \forall i\in \{1,2\}$, and a given state $x$, if and only if  
the strategy pair is a solution to ${C}^K$:  
\textcolor{black}{
\begin{equation*}
\begin{split}
 {[{C}^K]}:  & \max_{\sigma_{1},\sigma_{2},s_1,s_2}  \  
 \sum_{\theta_1\in \Theta_1} \alpha_1(\theta_1)s_1(x,\theta_1)+  \sum_{\theta_2 \in \Theta_2} \alpha_2(\theta_2)s_2(x,\theta_2) \\
&\quad +  \sum_{\theta_1\in \Theta_1} 
 \alpha_1(\theta_1)\mathbb{E}_{\theta_{2}\sim b_1, a_1\sim \sigma_1,a_2 \sim \sigma_{2} } \allowbreak [ J_1(x,a_1,a_{2},\theta_1,\theta_{2})]
\\
&\quad + \sum_{\theta_2 \in \Theta_2 } 
 \alpha_2(\theta_2)  \mathbb{E}_{\theta_{1}\sim b_2, a_1\sim \sigma_1,a_2 \sim \sigma_{2} } \allowbreak [ J_2({x},a_1,a_{2},\theta_1,\theta_{2})]
 \\
\text {s.t.}  \quad \quad   &(a)  \quad  
 \mathbb{E}_{\theta_{1}\sim b_2, a_1\sim \sigma_1 } \allowbreak [ J_2({x},a_1,a_{2},\theta_1,\theta_{2})]  \leq -s_2(x,\theta_2) , \forall \theta_2, \forall a_2, \\
 & (b) \quad 
\sum_{a_1\in A_1} \sigma_1(a_1|x,\theta_1)=1, \sigma_1(a_1|x,\theta_1)\geq 0,  \forall \theta_1, \\
&(c)  \quad  
\mathbb{E}_{\theta_{2}\sim b_1,a_2 \sim \sigma_{2} } \allowbreak [ J_1({x},a_1,a_{2},\theta_1,\theta_{2})] \leq -s_1(x,\theta_1), \ \forall \theta_1, \forall a_1, \\
& (d) \quad 
\sum_{a_2\in A_2} \sigma_2(a_2|x,\theta_2)=1, \sigma_2(a_2|x,\theta_2)\geq 0,  \forall \theta_2.
\label{eq: constrained optimization}
\end{split}
\end{equation*}
}
\textcolor{black}{The dimensions of decision variables $\sigma_1(a_1|x,\theta_1), \forall \theta_1\in \Theta_1,$ and $\sigma_2(a_2|x,\theta_2), \forall \theta_2\in \Theta_2,$ are $|A_1|\times |\Theta_1|$ and $|A_2|\times |\Theta_2|$, respectively. Besides,  $s_1(x,\theta_1), \forall \theta_1$ and $s_2(x,\theta_2), \forall \theta_2$  are scalar decision variables for each given $\theta_i, i\in \{1,2\}$. }
The \textcolor{black}{non-decision variables}  $\alpha_1(\theta_1), \forall \theta_1$ and $\alpha_2(\theta_2), \forall \theta_2,$ can be any strictly positive and finite numbers. 
The solution to $C^K$ exists and is achieved at the equality of constraints $(a), (c)$, i.e., $s_2^*(x,\theta_2)= -V_2(x,\theta_2)
, s_1^*(x,\theta_1)= -V_1(x,\theta_1)$. 
\end{theorem}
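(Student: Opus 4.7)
The plan is to recognize $C^K$ as a Bayesian analogue of the Mangasarian--Stone nonlinear programming formulation for bi-matrix Nash equilibria, in which the scalar variables $s_1(x,\theta_1)$ and $s_2(x,\theta_2)$ act as placeholders for the negatives of the type-dependent best-response values. The positive weights $\alpha_i(\theta_i)$ simply aggregate the type-conditioned contributions and, because $b_i$ enters only linearly, they will not affect the equivalence across types.

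First I would establish a universal upper bound of $0$ on the objective. For any feasible tuple $(\sigma_1,\sigma_2,s_1,s_2)$, constraint $(c)$ and the simplex constraint $(b)$ give
\begin{equation*}
\sum_{\theta_1}\alpha_1(\theta_1)\bigl[s_1(x,\theta_1)+\mathbb{E}_{\theta_2\sim b_1,a_1\sim\sigma_1,a_2\sim\sigma_2}[J_1]\bigr]
\leq \sum_{\theta_1}\alpha_1(\theta_1)\Bigl[s_1(x,\theta_1)-\sum_{a_1}\sigma_1(a_1|x,\theta_1)\,s_1(x,\theta_1)\Bigr]=0,
\end{equation*}
and the symmetric argument with $(a)$, $(d)$ bounds the player-$2$ block by $0$.

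Next I would verify attainability in the forward direction. Given an SBNE $(\sigma_1^*,\sigma_2^*)$ with type-conditioned equilibrium values $V_i(x,\theta_i)$, the assignment $s_i^*(x,\theta_i):=-V_i(x,\theta_i)$ turns constraints $(a)$ and $(c)$ into precisely the pure-strategy best-response inequalities that the SBNE conditions \eqref{eq: BNE matrix form} guarantee, while $(b)$, $(d)$ hold because $\sigma_i^*\in\Sigma_i$. At this feasible point each bracket $s_i^*+\mathbb{E}[J_i]$ collapses to $-V_i+V_i=0$, so the upper bound is attained. This proves the ``only if'' direction, pins the optimal objective value at $0$, and, together with standard existence of SBNE for finite Bayesian games, yields existence of a solution to $C^K$.

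For the reverse direction I would exploit tightness. If $(\sigma_1^*,\sigma_2^*,s_1^*,s_2^*)$ achieves the optimum $0$, then every inequality used in the upper-bound derivation must hold with equality. This forces the complementary-slackness relation that for each $\theta_1$ and each $a_1$ in the support of $\sigma_1^*(\cdot|x,\theta_1)$, constraint $(c)$ binds at $a_1$; hence every action in the support attains the common value $-s_1^*(x,\theta_1)$, which must therefore equal $\max_{a_1}\mathbb{E}_{\theta_2\sim b_1,a_2\sim\sigma_2^*}[J_1]$. Consequently $\sigma_1^*$ is a best response for player $1$ of every type and $s_1^*(x,\theta_1)=-V_1(x,\theta_1)$; the symmetric argument handles player $2$, recovering \eqref{eq: BNE matrix form}.

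The main obstacle I anticipate is ensuring that the complementary-slackness step remains clean under \emph{two-sided} incomplete information: one must check that the player-$i$ block of $C^K$ couples with $\sigma_j$ only through the linear expectation against $b_i$, so that the per-type best-response argument decouples across $\theta_1$ and $\theta_2$. Once this is verified, the nondegeneracy of $\alpha_i(\theta_i)>0$ guarantees that the tightness conclusion propagates to every type. Existence of the optimizer follows from compactness of the strategy simplices, boundedness of $s_i$ from above induced by $(a)$, $(c)$, and the attainment argument already noted.
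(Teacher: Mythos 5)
Your proposal is correct and follows essentially the same route as the paper's proof: the Mangasarian--Stone-style equivalence in which constraints $(a)$--$(d)$ force a nonpositive objective, any SBNE with $s_i^*(x,\theta_i)=-V_i(x,\theta_i)$ is feasible and attains the value $0$ (with existence inherited from SBNE existence in finite Bayesian games), and any optimizer at value $0$ must satisfy the equilibrium inequalities. Your support-based complementary-slackness step in the reverse direction is just a slightly more explicit rendering of the paper's argument, which instead averages the binding constraints over arbitrary mixed strategies and sets $\sigma_i=\sigma_i^*$ to recover the SBNE conditions.
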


\begin{proof}
\label{proof}
The finiteness and discreteness of the action and the type spaces guarantee the existence of the SBNE in mixed strategies as shown in  \cite{shoham2008multiagent}, which further guarantee that program ${C}^K$ has solutions. 
To show the equivalence between the solution to ${C}^K$ and the SBNE, we first show that every SBNE is a solution of ${C}^K$. 
If $(\sigma_1^{*}\in \Sigma_1, \sigma_2^*\in \Sigma_2)$ is a SBNE pair, then 
the quadruple  
$
\sigma_1^*(\theta_1), \sigma_2^{*}(\theta_2), 
s_2^*(x,\theta_2)=-V_2(x,\theta_2), \allowbreak
s_1^*(x,\theta_1)=-V_1(x,\theta_1), \forall \theta_i \in \Theta_i, 
\forall i\in \{1,2\}
$, 
is feasible because it satisfies constraints $(a),(b),(c),\allowbreak
(d)$. 
Constraints $(a)$ and $(c)$ imply a non-positive objective function of ${C}^K$. 
Since the value of the objective function achieved under this quadruple is  $0$, this quadruple is also optimal. 
Second, we show that $\sigma_1^*(\theta_1), \sigma_2^{*}(\theta_2), 
s_2^*(x,\theta_2), 
s_1^*(x,\theta_1)$, the result of ${C}^K$ is a SBNE.  
The solution of  ${C}^K$ should satisfy all the constraints, i.e., 
\textcolor{black}{
\begin{equation}
\label{eq: feasible}
\begin{split}
&  \mathbb{E}_{\theta_{1}\sim b_2, a_1\sim \sigma_{1}^*,a_2\sim \sigma_{2}} \allowbreak [ J_2({x},a_1,a_{2},\theta_1,\theta_{2})]  \leq -s^*_2(x,\theta_2) , \forall \theta_2,\forall \sigma_2\in \Sigma_2,\\
&  \mathbb{E}_{\theta_{2}\sim b_1, a_1\sim \sigma_{1},a_2\sim \sigma_{2}^*} \allowbreak [ J_2({x},a_1,a_{2},\theta_1,\theta_{2})]  \leq -s^*_1(x,\theta_1) , \forall \theta_1, \forall \sigma_1\in \Sigma_1.
\end{split}
\end{equation}
}
In particular, if we pick $\sigma_i(\theta_i)=\sigma_i^*(\theta_i), \forall \theta_i, \forall i\in \{1,2\}$, \textcolor{black}{and combine the  fact} that the optimal value is achieved at $0$, the inequality turns out to be an equality and equation \eqref{eq: feasible} becomes \eqref{eq: BNE matrix form}, which shows that $(\sigma_1^{*}\in \Sigma_1, \sigma_2^*\in \Sigma_2)$  is a SBNE. 
\qed
\end{proof}


Theorem \ref{Thm: double-sided program} focuses on the double-sided Bayesian game where each player player $i$ has a private type $\theta_i\in \Theta_i$. 
To accommodate the one-sided Bayesian game where player $i$'s type ${\theta}_i\in\Theta_i$ is known by both players and player $j$'s type remains unknown to player $i$,  we can modify program ${C}^K$  
by letting $\alpha_i({\theta}_i)>0$ and \textcolor{black}{$\alpha_i(\tilde{\theta}_i)=0, \forall \tilde{\theta}_i \in \Theta_i \setminus \{ {\theta}_i \}$.} 

\subsection{Multi-stage Bayesian Game and PBNE}
\label{sec: dynamic Bay}
From \eqref{eq: backward DP}, we can see that at stages $k<K$, each player optimizes the sum of the immediate utility $J_i^k$ and the utility-to-go $V_i^k$. 
Thus, we can replace the original stage utility $J_i^K$ in program ${C}^K$ with $V_i^k+J_i^k$ in program ${C}^k$ to compute the DBNE in a multi-stage Bayesian game. 
\begin{theorem}\label{Thm: PBNE}
Given a sequence of beliefs $b_i^k$ for each player $i\in \{1,2\}$ at each stage $k\in \{0,1,\cdots,K-1\}$, a strategy pair $({\sigma}_1^{*,0:K-1},{\sigma}_2^{*,0:K-1})$ constitutes a  DBNE of the $K$-stage Bayesian game under double-sided incomplete information with the expected cumulative utility $U^{0:K}_i$ in \eqref{eq: cumultive utility}, if, and only if  ${\sigma}_1^{*,k},{\sigma}_2^{*,k}$, $s_1^{*,k}(x^k,\theta_1),s_2^{*,k}(x^k, \theta_2)$ are the optimal solutions to the following constrained optimization problem ${C}^k$ for each $k\in \{0,1,\cdots,K-1\}$:
\begin{equation*}
\begin{split}
{[{C}^k]}:  & \max_{{\sigma}^k_1,{\sigma}_2^k,s^k_1,s^k_2} 
\quad 
\sum_{i=1}^2  
\sum_{\theta_i\in \Theta_i} \alpha_i(\theta_i)
\{s^k_i(x^k,\theta_i)+ \sum_{\theta_j\in \Theta_j}  
b_i^k (\theta_j | x^k, \theta_i) \sum_{a_1^k\in {A}_1^k}{\sigma}^k_1(a_1^k|x^k, \theta_1)\sum_{a_2^k\in {A}_2^k}{\sigma}^k_2(a_2^k|x^k, \theta_2)  \\
& \quad \cdot [{J}_i^k(x^k,a_1^k,a_2^k, \theta_{1},\theta_{2}) +{V}_i^{k+1}(f^k(x^{k},a_1^k,a_2^k),\theta_i)]
\}
\\
 \text{ s.t. }  &  (a)   \quad
\sum_{\theta_1\in \Theta_1} b_2^k (\theta_1| x^k, \theta_2)\sum_{a_1^k\in {A}_1^k} {\sigma}^k_1(a_1^k|x^k,\theta_1) \cdot [{J}_2^k(x^k,a_1^k,a_2^k, \theta_{1},\theta_{2}) +{V}_2^{k+1}(f^k(x^{k},a_1^k,a_2^k),\theta_2)] \\
& \quad \quad \leq -s^k_2(x^k,\theta_2), \forall \theta_2\in \Theta_2, \forall a_2^k\in A_2^k, 
\\
& (b) \quad
\sum_{\theta_2\in \Theta_2} b_1^k (\theta_2|x^k, \theta_1)\sum_{a_2^k\in {A}_2^k}{\sigma}^k_2(a_2^k|x^k,\theta_2) \cdot  [{J}_1^k(x^k,a_1^k,a_2^k, \theta_{1},\theta_{2})+{V}_1^{k+1}(f^k(x^{k},a_1^k,a_2^k),\theta_1)]\\
& \quad \quad \leq -s^k_1(x^k,\theta_1) , \forall \theta_1\in \Theta_1, \forall a_1^k\in A_1^k.
\end{split}
\end{equation*}
Similarly,  $\alpha_1(\theta_1),\alpha_2(\theta_2)$ can be any strictly positive and finite numbers, and 
$(s_1^{k}(x^k,\theta_1),s_2^{k}(x^k,\theta_2))$ 
is a sequence of {scalar variables} for each $x^k\in X^k, \theta_i\in \Theta_i, i\in \{1,2\}$. 
The optimum exists and is achieved at the equality of constraints $(a),(b)$, i.e., $s_i^{*,k}(x^k, \theta_i)=-V_i^k(x^k,\theta_i), \forall \theta_i\in \Theta_i, \forall i\in \{1,2\}$. 
\end{theorem}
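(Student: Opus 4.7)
The plan is to reduce Theorem 2 to a backward induction on the stage index $k$, where each inductive step is an application of Theorem 1 to an appropriately augmented one-stage Bayesian game. Adopting the convention $V_i^{K+1} \equiv 0$ makes the program $C^k$ of Theorem 2 collapse at $k = K$ to the program $C^K$ of Theorem 1, which furnishes both the base case and the identification $s_i^{*,K}(x^K,\theta_i) = -V_i^K(x^K,\theta_i)$.

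For the inductive step at a stage $k < K$, I would fix the value functions $V_i^{k+1}(\cdot, \theta_i)$ produced by the hypothesis and define augmented one-stage payoffs $\tilde{J}_i^k(x^k, a_1^k, a_2^k, \theta_1, \theta_2) := J_i^k(x^k, a_1^k, a_2^k, \theta_1, \theta_2) + V_i^{k+1}(f^k(x^k,a_1^k,a_2^k), \theta_i)$. The dynamic programming recursion \eqref{eq: backward DP} says exactly that at each state $x^k$ with belief $b_i^k$, every best response must maximize the expected value of $\tilde{J}_i^k$. Applying Theorem 1 to the augmented one-stage Bayesian game with payoffs $(\tilde{J}_1^k, \tilde{J}_2^k)$, beliefs $(b_1^k, b_2^k)$, and state $x^k$ produces exactly the program $C^k$ of Theorem 2: constraints $(a), (b)$ of Theorem 2 are constraints $(a), (c)$ of Theorem 1 after the substitution $J_i^K \mapsto \tilde{J}_i^k$, and the objective expands identically once $V_i^{k+1}$ is folded into the expected stage payoff.

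Both directions of the equivalence then follow quickly. For the \emph{only if} direction, sequential rationality of the DBNE at every admissible initial state $x^{k_0}$ forces $(\sigma_1^{*,k}, \sigma_2^{*,k})$ to constitute an SBNE of the augmented one-stage game at every $x^k \in X^k$, so Theorem 1 gives optimality in $C^k$ together with $s_i^{*,k}(x^k,\theta_i) = -V_i^k(x^k,\theta_i)$. For the \emph{if} direction, optimality in each $C^k$ yields, via Theorem 1, an SBNE of the augmented one-stage game at every state, and chaining these through \eqref{eq: backward DP} produces a strongly time-consistent strategy pair that satisfies condition (C2) of Definition 1.

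The main obstacle will be ensuring that the continuation value $V_i^{k+1}$ used to build $\tilde{J}_i^k$ is precisely the optimal objective recovered from $C^{k+1}$ at every state $x^{k+1} \in X^{k+1}$, including off-equilibrium states, since $C^k$ is asserted to hold pointwise in $x^k$. This uniformity in the state is exactly what underwrites strong time consistency: the construction must propagate the identification $s_i^{*,k+1} = -V_i^{k+1}$ across the entire state space, not only along realized trajectories. A secondary point worth flagging is that the beliefs $(b_1^k, b_2^k)$ are treated here as exogenous inputs; belief consistency (condition (C1) of Definition 1) lies outside the scope of Theorem 2 and is handled separately by the iterative scheme that alternates \eqref{eq: state-dependent belief update} with the programs $\{C^k\}_{k=0}^{K}$.
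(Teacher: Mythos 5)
Your proposal is correct and takes essentially the same route as the paper: the paper's proof of Theorem~\ref{Thm: PBNE} consists precisely of noting that it is analogous to Theorem~\ref{Thm: double-sided program} after replacing the stage utility $J_i^K$ with the augmented payoff $J_i^k + V_i^{k+1}(f^k(x^k,a_1^k,a_2^k),\theta_i)$, taking $V_i^{K+1}\equiv 0$ so that $C^K$ is the base case, and solving the programs $C^k$ backward from $k=K$ to $k=0$. Your explicit treatment of the pointwise-in-state identification $s_i^{*,k}(x^k,\theta_i)=-V_i^k(x^k,\theta_i)$ (including off-equilibrium states) and of the beliefs as exogenous inputs, with consistency deferred to the iterative scheme of Algorithm~\ref{algorithm}, spells out details the paper leaves implicit but does not change the argument.
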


The proof is similar to the one for Theorem \ref{Thm: double-sided program}. 
The decision variables ${\sigma}^k_i$ are of size $|{A}_i^k |\times |{X}^k| \times |\Theta_i|$. 
By letting stage $k=K$ and $V_i^{K+1}=0$, program ${C}^K$ for the static Bayesian game is a special case of ${C}^k$ for the multi-stage Bayesian game. 
We can solve program ${C}^{k+1}$ to obtain the DBNE strategy pair $({\sigma}^{k+1}_1,{\sigma}_2^{k+1})$ and the value of ${V}_i^{k+1}$. Then, we apply ${V}_i^{k+1}$ in program ${C}^k$ to obtain a DBNE strategy pair $({\sigma}^{k}_1,{\sigma}_2^{k})$ and the value of ${V}_i^{k}$. 
Thus, for any given sequences of type belief pairs $b_i^k, \forall i\in \{1,2\}, \forall k\in \{0,1,\cdots,K\}$, we can solve ${C}^k$ from $k=K$ to $k=0$ recursively to obtain the DBNE pair $({\sigma}_1^{*,0:K-1},{\sigma}_2^{*,0:K-1})$. 

\subsubsection{PBNE}
\label{sec:PBNErefine}
Given a sequence of beliefs, we can obtain the corresponding DBNE  via ${C}^k$ in a backward fashion. However, given a sequence of policies, both players  forwardly update their beliefs at each stage by \eqref{eq: state-dependent belief update}. Thus, we need to find a consistent pair of belief and policy sequences \textcolor{black}{as required by the PBNE}. 
As summarized in Algorithm \ref{algorithm}, we  iteratively alternate between the forward belief update and the backward policy computation to find the PBNE.  
We resort to $\varepsilon$-PBNE solutions when the existence of PBNE is not guaranteed. 


\begin{algorithm}[h]
\SetAlgoLined
 Initialization beliefs $b_i^k$ at each stage $k\in \{0,1,\cdots,K\}$, \textsc{IterNum}$>0$, $\varepsilon\geq 0$. 
 
 \While{the $t<$\textsc{IterNum} }{
 $t:=t+1$\;
 \For{ each $x^K\in {X}^K$ }
 {
  Compute SBNE strategy ${\sigma}_i^{*,K}$ and $V_i^K(x^K,\theta_i)$ via ${C}^K$.
  }
  \For{$k\leftarrow K-1$ \KwTo $0$}
  { 
   \For{each $x^k\in {X}^k$ }
 {
  Compute DBNE strategy ${\sigma}_i^{*,k}$ and $V_i^k(x^k,\theta_i)$ via ${C}^k$. 
  }
  }
   \For{$k\leftarrow 0$ \KwTo $K-1$}
  { 
  Update $b_i^k$ with ${\sigma}_i^{*,0:K-1}$ via  \eqref{eq: state-dependent belief update}. 
  }
  \uIf{${\sigma}_i^{*,0:K-1}, \forall i \in \{1,2\}$,  satisfy \eqref{eq: PBNE in def}}{
\textbf{Terminate}\
  }
  }
  \textbf{Output}  $\varepsilon$-PBNE strategy pair $(\sigma_1^{*,0:K-1},\sigma_2^{*,0:K-1})$ and consistent beliefs $b_i^k, \forall k\in \{0,\cdots,K\}$. 
 \caption{Numerical Solution of $\varepsilon$-PBNE\label{algorithm}}
\end{algorithm}

Algorithm \ref{algorithm} provides a computational approach to find $\varepsilon$-PBNE with the following procedure. 
First, both players initialize their beliefs $b_i^k$ for every state $x^k$ at stage $k\in \{0,1,\cdots,K\}$, according to their types. 
Then, they compute the DBNE strategy pair $\sigma_i^{*,0:K}, \forall i\in \{1,2\}$, under the given belief sequence at each stage by solving program ${C}^k$ from stage $K$ to stage $0$ in sequence. 
Next, they update their beliefs at each stage according to the strategy pair $\sigma_i^{*,0:K-1}, \forall i\in \{1,2\}$, via the Bayesian update \eqref{eq: state-dependent belief update}. 
If the strategy pair $\sigma_i^{*,0:K-1}, \forall i\in \{1,2\}$, satisfies \eqref{eq: PBNE in def} under the updated belief, we find the $\varepsilon$-PBNE and terminate the iteration. Otherwise, we repeat the backward policy computation in step two and the forward belief update in step three.


\section{Case Study}
\label{sec: Casestudy}
\textcolor{black}{The model presented in Section \ref{sec:model} can be applied to various APT scenarios. To illustrate the framework, this section presents a specific attack scenario} where the attacker stealthily initiates infection and escalates privileges in the cyber network, aiming to launch attacks on the physical plant as shown in Fig. \ref{fig: StateDiagram}. 
\textcolor{black}{Three vertical columns in the left block illustrate the state transitions across three stages: the initial compromise, the privilege escalation, and the sensor compromise of a physical system. 
The red squares at each column represent possible states at that stage. 
The right block illustrates a simplified flow chart of the Tennessee Eastman Process.} 
We use the Tennessee Eastman process as a benchmark of industrial control systems to show that attackers can  strategically compromise the SCADA system and decrease the operational efficiency of a physical plant without triggering the alarm. 

In this case study, we adopt the binary type space $\Theta_2=\{\theta_2^b, \theta_2^g\}$ and $\Theta_1=\{\theta_1^H, \theta_1^L\}$ for the user and the defender, respectively. 
In particular, $\theta_2^b$ and $\theta_2^g$ denote the adversarial and legitimate user, respectively; $\theta_1^H$ and $\theta_1^L$ denote the sophisticated and primitive defender, respectively. 
The bi-matrices in Table \ref{table: multiBay-initial}, \ref{table: muti-stageBay}, and \ref{Table: FinalMatrix} represent both players' expected utilities at three stages, respectively. 
In these matrices, the defender is the \textit{row player} and the user is the \textit{column player}. 
Each entry of the matrix corresponds to players' payoffs under their action  pairs, types, and the state. 
In particular, the two red numbers in the parenthesis before the semicolon are the payoffs of the defender and the user, respectively, under type $\theta_2^b$, while the parenthesis  in blue after the semicolon presents the payoff of the defender and the user, respectively, under type $\theta_2^g$. 

\begin{figure*}
\centering
\includegraphics[width=0.95 \textwidth]{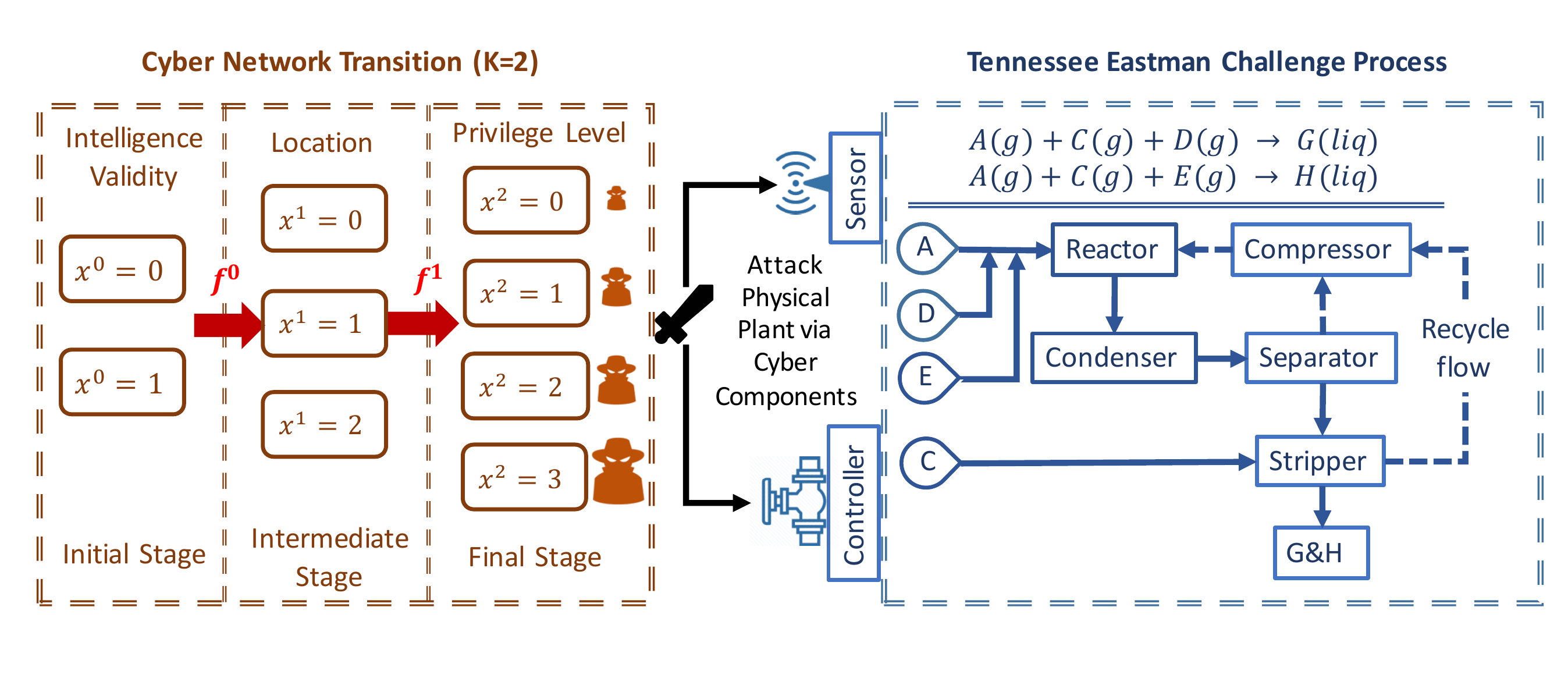}
\caption{The diagram of the cyber state transition (\textcolor{black}{denoted by the left block in orange}) and the physical attack \textcolor{black}{on Tennessee Eastman process via the compromise of the SCADA system (denoted by the right block in blue).} APTs can damage the normal industrial operation by falsifying controllers' setpoints, tampering sensor readings, and blocking communication channels to cause delays in either the control message or the sensing data. 
\label{fig: StateDiagram}}
\end{figure*}

\subsection{Initial Stage: Phishing Emails}
\textcolor{black}{We use a binary set to represent whether the reconnaissance is effectual $x^0=1$ or not $x^0=0$. Effectual reconnaissance collects essential intelligence} that can better support APTs for an initial entry through phishing emails. To penalize the adversarial exploitation of the open-source intelligence (OSINT) data, the defender can create avatars (fake personal profiles) on the social network or the company website as shown in  \cite{molok2010information}. 

\textcolor{black}{At the initial stage of interaction, a user can send emails with non-executable attachments and shortened URLs to the accounts of entry-level employees, managers, or avatars. These three action options of the user are represented by $a_2^0=0, 1, 2$, respectively. 
Non-executable files such as PDF and MS Office are widely used in organizations yet an APT attacker can exploit them to execute malicious actions on the victim's computer. 
The shortened URL is created by legitimate service providers such as \textit{Google URL shortener} yet can redirect to malicious links. 
The existing email security mechanisms are not completely effective for identifying malicious PDF files (see \cite{nissim2015detection}) and malicious links  behind  shortened URLs (see \cite{sahoo2017malicious}). 
As a supplement to technical countermeasures,  security training should be emphasized to increase employees' security awareness and  protect them from web phishing.}   
For example, after receiving suspicious links or attachments \textcolor{black}{with strange names  at unexpected times}, the entry-level employee and the manager should be aware of the potential risk and \textcolor{black}{apply extra security measures} such as a digital signature request from the sender before clicking the link or opening the attachment. 
\textcolor{black}{They should also be sufficiently alert and report immediately if a PDF does not contain the information that it claims to have. 
Then isolation can be applied to prevent the attacker from the potential lateral movement.}  
Since employees' awareness and alertness diminish over time, the security training needs to be repeated at reasonable intervals as argued in \cite{mitnick2011art}, which can be costly. 
With a limited budget, the defender can choose to educate entry-level employees, manager-level employees, or no training to avoid the prohibitive training cost $c^0$. \textcolor{black}{These three action options of the defender are represented by  $a_1^0=1,2,0$, respectively.} 
The utility matrix of the initial infection is given in Table \ref{table: multiBay-initial}. 
\textcolor{black}{If the user is legitimate, i.e., $\theta_2=\theta_2^g$, then as denoted in the blue color, he receives an immediate reward $r_1^0$ if he successfully communicates with the employee or the manager by email, but receives a substantial penalty $r_{g,f}^0<0$ if he emails the avatars because he should not contact a non-existing person. 
If the user is adversarial, i.e., $\theta_2=\theta_2^b$, then as denoted in the red color, he receives an immediate attack reward $r_2^0$ if the email receiver does not have proper security training, but an additional attack cost $r^0$ if the receiver has been trained properly.}  
The adversarial user receives a faked reward $r_{b,f}^0>0$ when contacting the avatar, yet arrives at an unfavorable state at stage $k=1$ and receives few rewards in the future stages. 
The training cost and the attack cost are both different for the primitive and the sophisticated defender, i.e., $c^0:=c_L^0 \cdot\mathbf{1}_{\{\theta_1=\theta_1^L\}}+c_H^0 \cdot \mathbf{1}_{\{\theta_1=\theta_1^H\}}$ and  $r^0:=r_L^0 \cdot \mathbf{1}_{\{\theta_1=\theta_1^L\}}+r_H^0 \cdot \mathbf{1}_{\{\theta_1=\theta_1^H\}}$. 
The sophisticated defender holds the security training with a higher frequency, which incurs a higher cost, i.e.,  $c_H^0>c_L^0$,  but is also more effective in mitigating web phishing, i.e.,  $r_H^0>r_L^0$. 

\begin{table}[]
\caption{\textcolor{black}{The expected utilities of the defender and the user at the initial stage, i.e., ${J}_1^0$ and ${J}_2^0$, respectively.} 
\label{table: multiBay-initial}}
\scalebox{1}{
\begin{tabular}{|M{1.5 cm}| M{2.3cm} | M{2.3cm}|M{2.5cm} | N  }
\hline
$\textcolor{red}{\theta_2^b}\textbf{;}\textcolor{blue}{\theta_2^g}$   & \textbf{Email Employees} & \textbf{Email Managers}  & \textbf{Email Avatars} & \\ [11pt] \hline
\textbf{No Training}           & $\textcolor{red}{(-r_2^0,r_2^0)}\textbf{;} \textcolor{blue}{(0,r_1^0)}$   & $\textcolor{red}{(-r_2^0, r_2^0)}\textbf{;}\textcolor{blue}{(0, r_1^0)}$  & $\textcolor{red}{(0,r_{b,f}^0)}\textbf{;} \textcolor{blue}{(0,r_{g,f}^0)}$ &\\ [15pt] \hline
\textbf{Train Employees}         & $\textcolor{red}{(-c^0,-r^0)}\textbf{;}\textcolor{blue}{(-c^0,r_1^0)}$  & $\textcolor{red}{(-c^0, r_2^0)}\textbf{;}\textcolor{blue}{(-c^0,r_1^0)}$  & $\textcolor{red}{(-c^0,r_{b,f}^0)} \textbf{;} \textcolor{blue}{(-c^0,r_{g,f}^0)}$ & \\  [15pt] \hline
\textbf{Train Managers}           & $\textcolor{red}{(-c^0,r_2^0)} \textbf{;}\textcolor{blue}{(-c^0,r_1^0)}$  & $\textcolor{red}{(-c^0,-r^0)}\textbf{;}\textcolor{blue}{(-c^0,r_1^0)}$  & $\textcolor{red}{(-c^0,r_{b,f}^0)} \textbf{;} \textcolor{blue}{(-c^0,r_{g,f}^0)}$ &\\ [15pt] \hline
\end{tabular}
}
\end{table}
\subsection{Intermediate Stage: Privilege Escalation}
The state  at the intermediate stage can be interpreted as the location of the user where $x^1=1$ refers to the employee's computer, $x^1=2$ refers to the manager's computer, and $x^1=0$ refers to the quarantine area. 
\textcolor{black}{After the initial access, the user operates within a process of low privilege. To access certain resources, the user needs to gain higher-level privileges. 
An attacker can utilize the process injection to execute malicious code in the address space of a live process and masquerade as legitimate programs to evade detection as shown in  \cite{processinjection}. 
A mitigation method for the defender is to prevent certain endpoint behaviors that can occur during the process injection.} 
Table \ref{table: muti-stageBay} presents this game of privilege escalation. 

The user can choose to escalate his privileges, or choose `\textit{no operation performed  (NOP)'}. \textcolor{black}{The two action options are denoted by $a_2^1=1$ and $a_2^1=0$, respectively.} 
The defender can choose to either restrict or permit an escalation, which are denoted by $a_1^1=1$ and  $a_1^1=0$, respectively. 
\textcolor{black}{
If the legitimate user escalates his privilege and the defender permits escalation, then both players obtain a reward of $r_1^1$. 
If the legitimate user escalates his privilege and the defender restricts escalation, then the efficiency reduction brings a loss of $r_1^1$ to both players. 
On the other hand, if the adversarial user escalates his privilege and the defender permits escalation, the defender receives a loss of $r_2^1$. 
If the adversarial user escalates his privilege and the defender restricts escalation, then the adversarial user has to resort to other attack techniques which lead to a higher rate of detection. Thus, the defender obtains a reward while the attacker receives an additional cost.  
We assume that the reward and the additional cost are both $r^1_L$ if the defender is primitive, and $r^1_H$ if the defender is sophisticated, i.e., $r^1=r^1_L \cdot \mathbf{1}_{\{\theta_1=\theta_1^L\}}+r^1_H \cdot \mathbf{1}_{\{\theta_1=\theta_1^H\}}$. }

\begin{table}
\caption{
\textcolor{black}{The expected utilities of the defender and the user at the intermediate stage, i.e., ${J}_1^1$ and ${J}_2^1$, respectively.}   
\label{table: muti-stageBay}}
\centering
\scalebox{1}{
\begin{tabular}{|c|c|c|N}
\hline 
$\textcolor{red}{\theta_2^b} \textbf{;}\textcolor{blue}{\theta_2^g}$  &   \textbf{NOP} & \textbf{Escalate Privilege}&\\ [7pt] \hline
\textbf{Permit Escalation}               & $\textcolor{red}{(0,0)}\textbf{;} \textcolor{blue}{(0,0)}$   & $\textcolor{red}{(-r^1_2, r^1_2)} \textbf{;} \textcolor{blue}{(r^1_1, r^1_1)}$   &\\ [10pt] \hline
\textbf{Restrict Escalation}          & $\textcolor{red}{(0,0)}\textbf{;} \textcolor{blue}{(0,0)}$  & $\textcolor{red}{(r^1,-r^1)} \textbf{;} \textcolor{blue}{(-r^1_1,-r^1_1)}$    &\\ [10pt] \hline
\end{tabular}
}
\end{table}



\subsection{Final Stage: Sensor Compromise}
\textcolor{black}{The state at the final stage represents four possible privilege levels, denoted by $x^2=\{0,1,2,3\}$, respectively. The privilege level affects the result of the physical attack at the final stage.}
The defender's and the user's actions, and the state at the intermediate stage determine the state at the final stage.
For example, if the user is at the quarantine area during the intermediate stage, then he ends up with a level-zero privilege  regardless of actions taken by the defender and himself. 
Users who take control of the manager's computer at the intermediate stage can obtain a higher privilege level than those who start from the entry-level employee's computer, yet the degree of escalation is reduced if the defender chooses to restrict escalation.

We \textcolor{black}{modify the Simulink model} in \cite{BATHELT2015309} to quantify the monetary loss of the Tennessee Eastman process under sensor compromises. 
\textcolor{black}{Our attack model of sensor compromise is presented in Section \ref{sec:attackmodel}.  A new performance metric to quantify the operational efficiency of the Tennessee Eastman process is proposed in Section \ref{sec:metric} and applied in the game matrix in Section \ref{sec:utility}. }

\subsubsection{Performance Metric}
\label{sec:metric}
The Tennessee Eastman process involves two irreversible reactions to produce two liquid (liq) products $G,H$ from four gaseous (g) reactants $A,C,D,E$ \textcolor{black}{as shown in the right block of Fig. \ref{fig: StateDiagram}.} 
The control objective is to maintain a desired production rate as well as quality while stabilizing the whole system under the Gaussian noise to avoid violating safety constraints such as a high reactor pressure, a high reactor temperature, and a high/low separator/stripper liquid level. 
Previous studies on the security of the Tennessee Eastman process have mostly focused on \textcolor{black}{how an attacker can cause} the shortest shutdown time (see   \cite{krotofil2013resilience}), or a \textcolor{black}{serious violation} of a setpoint, e.g., the reactor pressure exceeds $3,000$ kpa (see  \cite{cardenas2011attacks}). 
\textcolor{black}{These attacks successfully cause the shutdown of the plant and a few days of shutdowns can incur a considerable financial loss. However, the shutdown also discloses the attack and leads to an immediate patch and a defense strategy update. 
Thus, it becomes harder for the same kind of attacks to succeed after the plant recovers from the shutdown.} 

In our APT scenario, the attacker \textcolor{black}{aims to stealthily} decrease the operational efficiency of the plant, i.e., deviate the normal operation state of the plant without triggering the safety alarm or shutting down the plant. 
\textcolor{black}{By compromising the SCADA system and generating fraudulent sensor readings, the attacker can stealthily make the plant operates at a non-optimal state with reduced utilities. 
} 
\textcolor{black}{The following economic metrics affect the operational utility of the Tennessee Eastman process:} 
\begin{itemize}
\item Hourly \textcolor{black}{operating cost} $C_o$ with the unit  $(\$/h)$ is taken as the sum of purge costs, product stream costs, compressor costs, and stripper steam costs.

\item  Production rate $R_p$ with the unit $(m^3/h)$ is the volume of total products per hour. 
\item Quality of products $Q_p$ with the unit $ (G \ { mole} \%)$, is the percentage of $G$ among total products. 
\item $P_G$ with the unit $(\$/m^3)$ is the price of product $G$. 
\end{itemize}

\textcolor{black}{We propose a new performance metric $U_{TE}$, the \textit{per-hour utility} to quantify the operational efficiency of the Tennessee Eastman process as follows:}
\begin{equation}
\label{eq: Tennessee Eastman utility}
U_{TE}=R_p\times Q_p \times P_G -C_o.
\end{equation}

\subsubsection{Attack Model}
\label{sec:attackmodel}
An attack model \textcolor{black}{is characterized by} two separate parts, \textit{information} and \textit{capacity}. 
First, the information available to the attacker such as \textcolor{black}{readings of different sensors} can affect the performance of the attack \textcolor{black}{differently}.   
For example, observing the input rate of the raw material in the Tennessee Eastman process is less beneficial for the attacker than the direct measurements of  $P_G, R_p, Q_p, C_o$ that affect the utility metric in \eqref{eq: Tennessee Eastman utility}. 
Second, attackers can have different capacities in accessing and revising controllers and sensors. 
An attacker may change the parameters of the \textcolor{black}{proportional-integral-derivative} controller,  directly falsify the controller output, or indirectly deviate the setpoint by tampering, blocking or delaying sensor readings. 

In this experiment, we assume a \textcolor{black}{reading} manipulation of sensor XMEAS$(40)$ and XMEAS$(17)$ in loop $8$ and loop $13$ of Tennessee Eastman process (see  \cite{ricker1996decentralized}), respectively.  
Sensor XMEAS$(40)$ measures the composition of component $G$ 
and sensor XMEAS$(17)$ measures the stripper underflow. 
A higher privilege state $x^2\in \{0,1,2,3\}$ means that the user can access more sensors for a longer time, \textcolor{black}{which results in a larger loss and thus a smaller utility of $r_1^2(x^2)$ to the defender if the user is adversarial.} 
  Fig.  \ref{fig: Tennessee Eastman} shows the variation of $U_{TE}$ versus the simulation time under four different privilege states.  
  \textcolor{black}{We use the time average of these utilities to obtain the normal operational utility $r_4^2$ and \textcolor{black}{compromised} utilities $r_1^2(x^2)$ under four different privilege states $x^2\in \{0,1,2,3\}$.} 
\begin{figure}
    \centering
    \begin{subfigure}[]{0.89\textwidth}
        \centering
        \includegraphics[width=1 \textwidth]{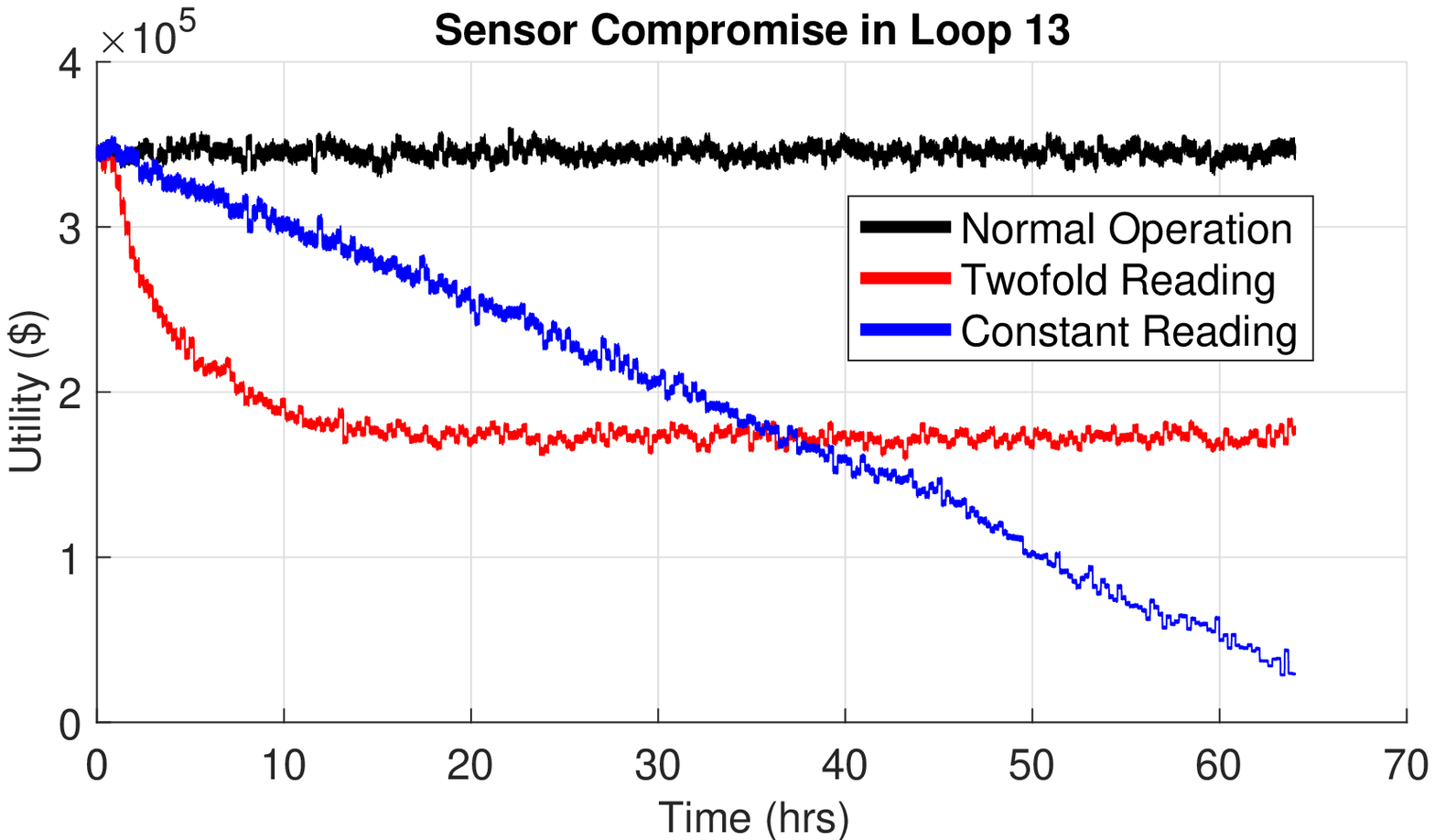}
    \end{subfigure}%
    
    \begin{subfigure}[]{0.89\textwidth}
        \centering
        \includegraphics[width=1 \textwidth]{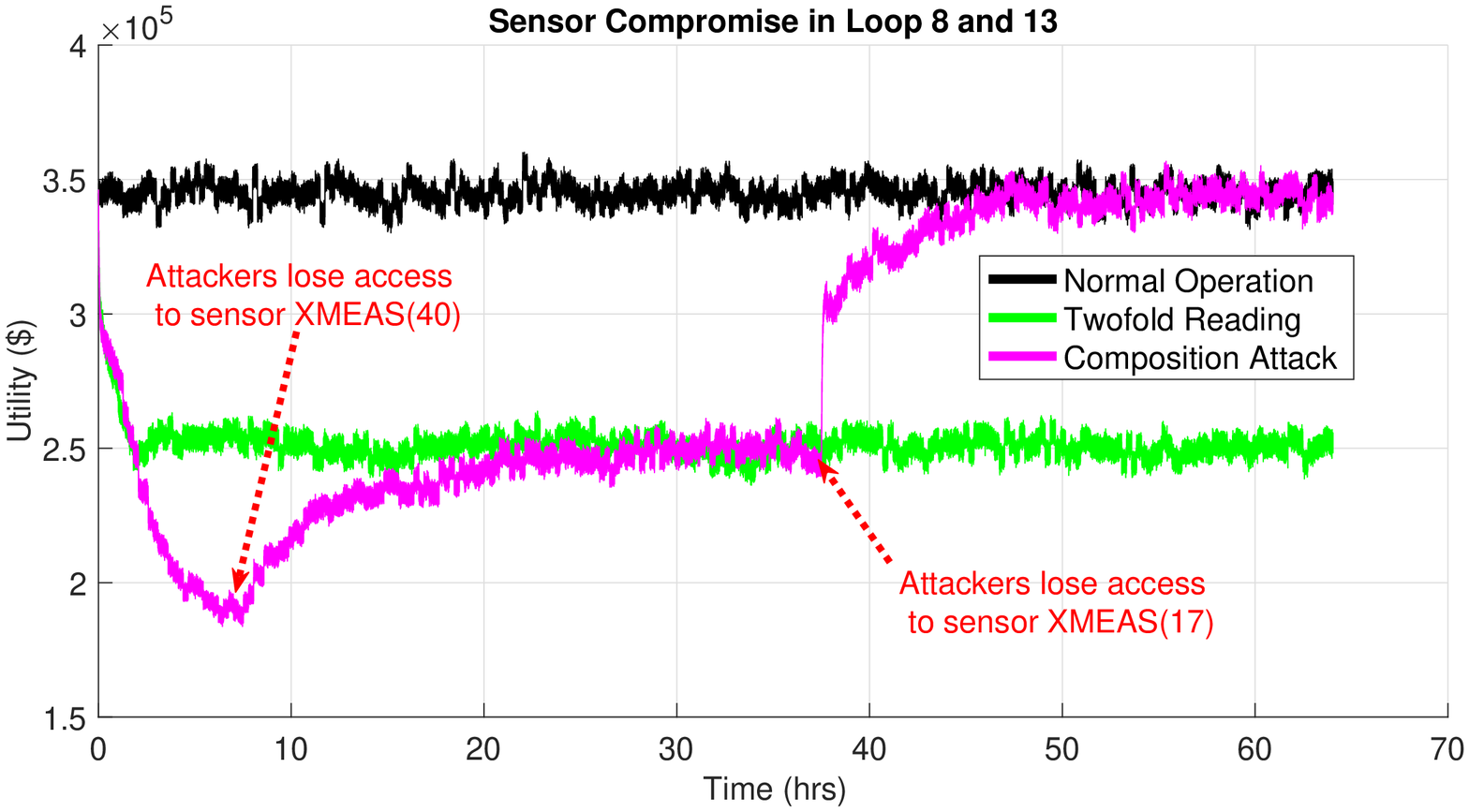}
    \end{subfigure}
    \caption{\label{fig: Tennessee Eastman}
    The economic impact of sensor compromise in the Tennessee Eastman process. 
    The black line represents the utility of Tennessee Eastman process under the normal operation while the other four lines represent the utility of Tennessee Eastman process under \textcolor{black}{attacks with four possible privilege levels.} 
    We use the time average of these utilities to obtain the normal operational utility $r_4^2$ and \textcolor{black}{compromised} utilities $r_1^2(x^2), \forall x^2\in \{0,1,2,3\}$, under four different \textcolor{black}{states of privilege levels} in Table \ref{Table: FinalMatrix}. 
 }
\end{figure}
The attacker compromises the sensor and generates fraudulent readings. The fraudulent reading can be a constant, denoted by the blue line, or a double of the real readings, denoted by the red or green lines. 
The pink line represents a composition attack 
with a limited control time. 
Initially, the attacker manages to compromise both sensors by doubling their readings. 
After the attacker loses access to XMEAS$(40)$ at the $6^{th}$ hour, the system is sufficiently resilient to recover partially in about $16$ hours and achieve the same level of utility as the single attack in green.  When the attacker also loses access to XMEAS$(17)$ at the $36^{th}$ hour, the utility goes back to normal in about $13$ hours.   

\subsubsection{Utility Matrix}
\label{sec:utility}
\textcolor{black}{Attacks against SCADA system can apply command injection attacks to inject false control and compromise sensor readings as shown in \cite{morris2013industrial}. 
Encryption can be introduced to conceal these malicious commands. 
However, a legitimate user may also encrypt his communication with the sensor to avoid eavesdropping and enhance privacy.}

Therefore, at the final stage, the user has two options, sends commands to the sensor with or without encryption, which are denoted by $a_2^2=1$ and $a_2^2=0$, respectively. 
The defender chooses to apply either a complete or selective monitoring, denoted by $a_1^2=1$ and $a_1^2=0$, respectively. 
\textcolor{black}{
The complete monitoring stores all sets of communication data and analyzes them elaborately to identify malicious commands despite encryption. 
The selective monitoring cannot identify malicious commands if they are encrypted. 
The implementation of the complete monitoring incurs an additional cost $c^2$ compared to the selective one. 
The last-stage utility matrix of both players is defined in Table \ref{Table: FinalMatrix}. 
If the user is legitimate, as denoted in blue, both the defender and the user can receive a reward of $r^4$ when the Tennessee Eastman process operates normally. Legitimate users further receive a utility reduction of $r^4/2$ for the potential privacy loss if they choose unencrypted commands. 
For adversarial users, they send malicious commands only when the communication is encrypted to evade detection. Thus, if they choose not to encrypt the communication, they receive $0$ utility and the defender receives a reward of $r^4$ for the normal operation. However, if they choose to send encrypted malicious commands, both players' rewards depend on whether the defender chooses the selective or complete monitoring. If the defender chooses the selective monitoring, then the adversarial user can successfully compromise the sensor, which results in a reduced utility of $r_1^2(x^2)$. In the meantime, the attacker benefits from the reward reduction of $r_4^2-r_1^2(x^2)$. If the defender chooses the complete monitoring, then the adversarial user suffers a loss of $r^2$ for being detected.} 
The detection reward and the implementation cost for two types of defenders are $r_L^2,r_H^2$ and $ c_L^2,c_H^2$, respectively. 
Let $r^2:=r_L^2 \cdot \mathbf{1}_{\{\theta_1=\theta_1^L\}}+r_H^2 \cdot \mathbf{1}_{\{\theta_1=\theta_1^H\}}$ and $c^2:=c_L^2 \cdot \mathbf{1}_{\{\theta_1=\theta_1^L\}}+c_H^2 \cdot \mathbf{1}_{\{\theta_1=\theta_1^H\}}$.

\begin{table}
\caption{\label{Table: FinalMatrix}
\textcolor{black}{The expected utilities of the defender and the user at the final stage, i.e., ${J}_1^2$ and ${J}_2^2$, respectively.} 
}
\centering
\scalebox{1}{
\begin{tabular}{|c|c|c|N}
\hline
$\textcolor{red}{\theta_2^b} \textbf{;} \textcolor{blue}{\theta_2^g}$& \textbf{Unencrypted Command (UC)} & \textbf{Encrypted Command (EC)} &\\ [7pt] \hline
\textbf{Selective Monitoring (SM)}               & $\textcolor{red}{(r_4^2,0)}\textbf{;}\textcolor{blue}{(r_4^2,r_4^2/2)}$   & $\textcolor{red}{(r_1^2(x^2), r_4^2-r_1^2(x^2))}\textbf{;}\textcolor{blue}{(r_4^2 , r_4^2)}$    &\\ [10pt] \hline
\textbf{Complete Monitoring (CM)}         & $\textcolor{red}{(r_4^2-c^2,0)}\textbf{;} \textcolor{blue}{(r_4^2-c^2,r_4^2/2)}$  & $\textcolor{red}{(r^2-c^2,-r^2)} \textbf{;} \textcolor{blue}{(r^2_4-c^2,r^2_4)}$   &\\ [10pt] \hline
\end{tabular}
}
\end{table}

\section{Computation Results}
\label{sec:result}
\textcolor{black}{In this section, we apply the algorithms introduced in Section \ref{sec:computation}  to compute both players' strategies and utilities at the equilibrium. 
We implement our algorithms in MATLAB and use YALMIP (see \cite{Lofberg2004}) as the interface to call external solvers such as BARON (see \cite{ts}) to solve the optimization problems.} 
\textcolor{black}{We present elaborate results from the concrete case study and provide meaningful insights of the proactive cross-layer defense against multi-stage APT attacks that are stealthy and deceptive.}  

For the static Bayesian game at the final stage in Section \ref{sec:Finalstage}, 
\textcolor{black}{we focus on illustrating how two players' private types affect their policies and utilities under different information structures. 
We further apply sensitivity analysis to show how the value of the key parameter affects the defender's and the attacker's utilities.}  
For the multi-stage Bayesian game in \ref{sec: multi-stage and PBNE}, \textcolor{black}{we focus on the dynamic of the belief update and state transition under the interaction of the stealthy attacker and the proactive defender.  
Moreover, we investigate how the adversarial and defensive deception, and how the initial state can affect the stage utility and the cumulative utility of the user and the defender.} 

\subsection{Final Stage and SBNE}
\label{sec:Finalstage}
\begin{figure}
    \centering
    \begin{subfigure}[t]{0.48\textwidth}
        \centering
        \includegraphics[width=1 \textwidth]{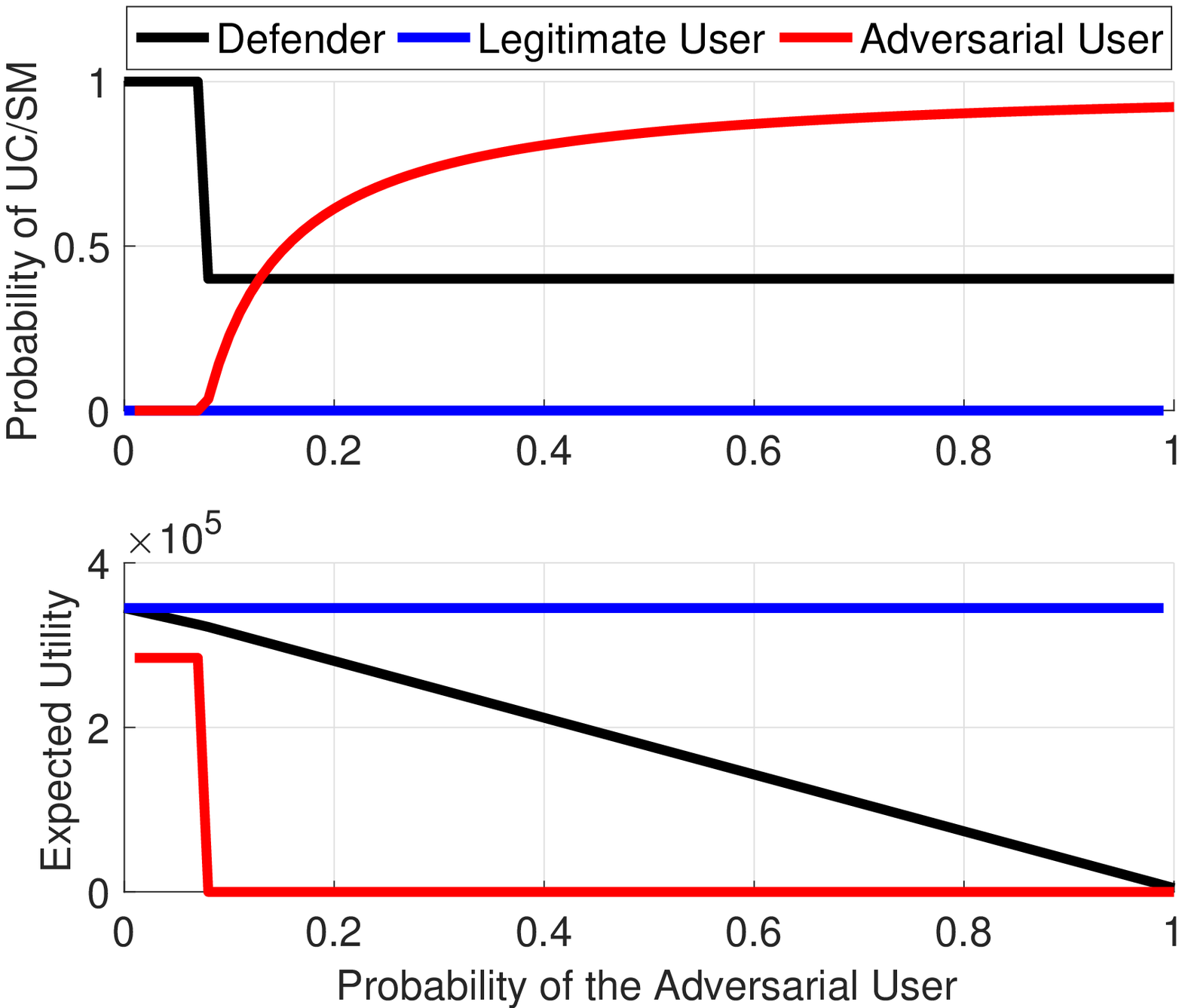}
        \subcaption{The user knows that the defender is primitive, yet the defender only knows the probability of the user being adversarial. 
        \label{fig: FinalstageBlief} }
    \end{subfigure}%
    \hfill 
    \begin{subfigure}[t]{0.48\textwidth}
        \centering
        \includegraphics[width=1 \textwidth]{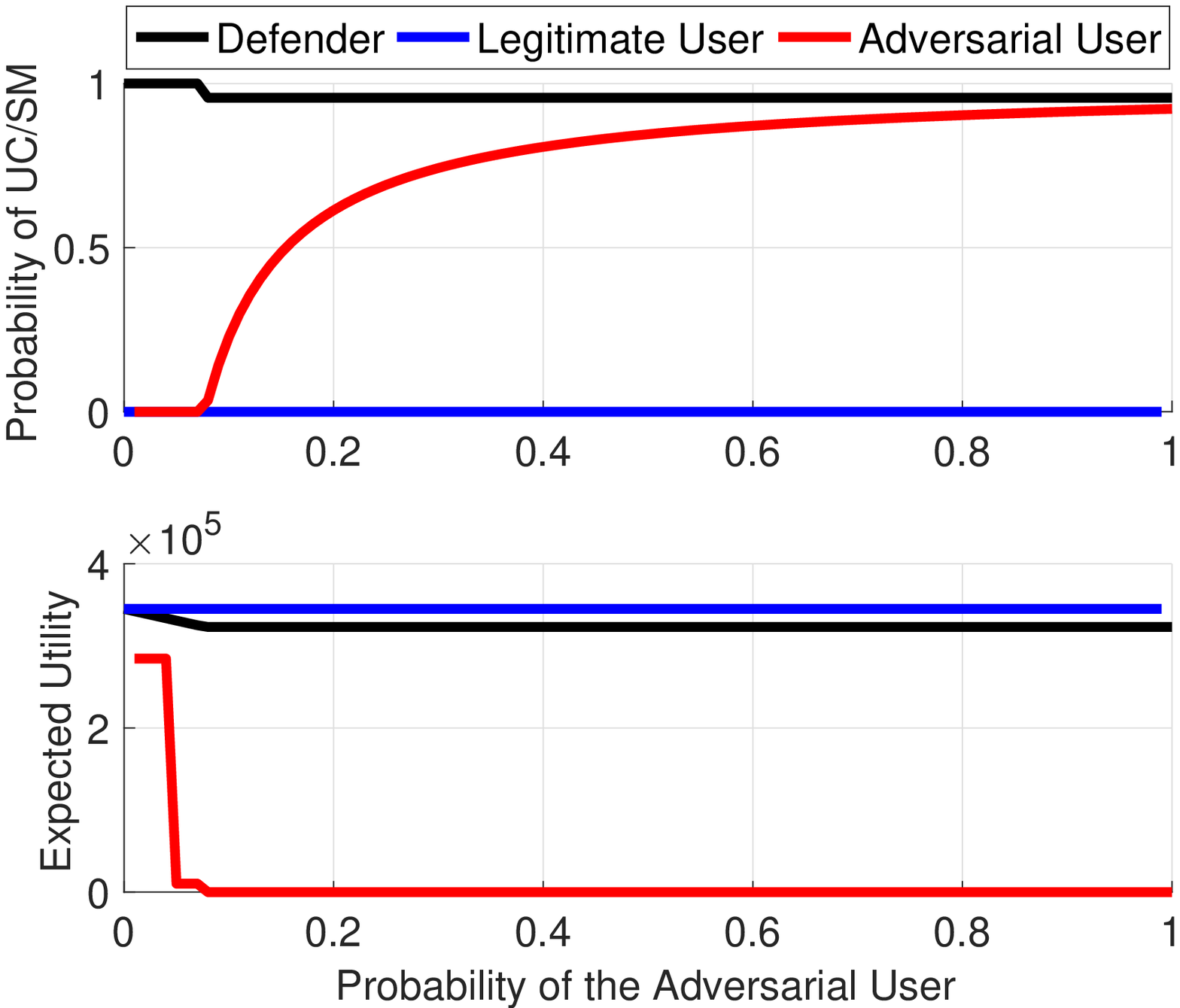}
        \subcaption{\label{fig: Finalstagebelief3revise}
    Both players' types are private, and each player only knows the probability of the other player's type. }
    \end{subfigure}
    \caption{The SBNE strategy and the expected utility of  the primitive defender and the user who is either legitimate or adversarial. The $x$-axis represents the probability of the user being adversarial. \textcolor{black}{The $y$-axis of the upper figure represents the probability of either the user taking action `\textit{selective monitoring (SM)}' or the defender taking action `\textit{unencrypted command (UC)}'.}
 }
\end{figure}

 
 \begin{figure}
 \centering
  \includegraphics[width=0.65  \textwidth]{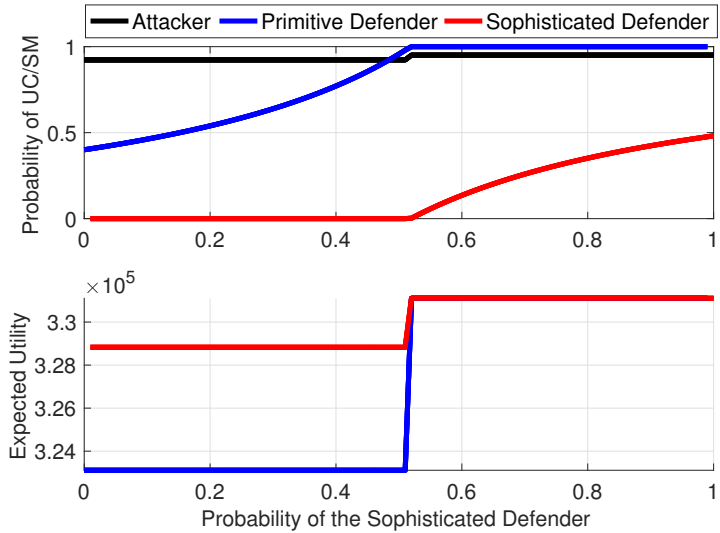}
\caption{
The SBNE strategy and the expected utility of the adversarial user and the defender who is either primitive or sophisticated. 
The defender knows that the user is adversarial while the adversarial user only knows the probability of the defender being primitive. 
The $x$-axis represents the probability of the defender being sophisticated.
\textcolor{black}{The $y$-axis of the upper figure represents the probability of either the user taking action `\textit{selective monitoring (SM)}' or the defender taking action `\textit{unencrypted command (UC)}'.}
 \label{fig: Finalstagebelief2revise}}
 \end{figure}
 
Players' beliefs affect their policies and expected utilities at the final stage. We discuss three different scenarios as follows. 
In Fig \ref{fig: FinalstageBlief}, the defender does not know the user's type. 
In Fig. \ref{fig: Finalstagebelief2revise}, the user does not know the defender's type. 
In  Fig. \ref{fig: Finalstagebelief3revise}, both the user and the defender do not know the other's type. 
In all three scenarios, the $x$-axis represents the belief of either the user or the defender. \textcolor{black}{The $y$-axis of the upper figure represents the probability of either the user taking action `\textit{selective monitoring (SM)}' or the defender taking action `\textit{unencrypted command (UC)}'.}
Fig. \ref{fig: FinalstageBlief} shows the following trends as the user becomes more likely to be adversarial. 
First, two black lines show that the expected utility of the defender decreases and the defender is more inclined to apply action `\textit{complete monitoring}' after her belief exceeds a threshold. 
Second, two red lines show that the adversarial user takes action `\textit{unencrpted command}' with a higher probability and only gains a reward when the probability of adversarial users is sufficiently small. 
Thus, we conclude that when the probability of the adversarial user increases, the defender tends to invest more in cyber defense so that the attacker behaves more conservatively and inflicts fewer losses. 
Third, the two blue lines show that the legitimate user always chooses `\textit{encrypt command}' and receives a constant utility, which indicates that the proactive defense does not affect the behavior and the utility of legitimate users at this stage. 



Fig. \ref{fig: Finalstagebelief2revise} shows that the defender benefits from introducing defensive deception. 
When the defender becomes more likely to a sophisticated one, both types of defenders can have a higher probability to apply the selective monitoring and save the extra surveillance cost of the complete monitoring. 
The attacker with incomplete information has a threshold policy and switches to a lower attacking probability after reaching the threshold of $0.5$ as shown in the black line. 
When the probability goes beyond the threshold, the primitive defender can pretend to be a sophisticated one and take action `\textit{selective monitoring}'. 
Meanwhile, a sophisticated defender can reduce the security effort and take  action 
`\textit{selective monitoring}'
with a higher probability since the attacker becomes more cautious in taking adversarial actions after identifying the defender as more likely to be sophisticated. 
It is also observed that the sophisticated defender receives a higher payoff before the attacker's belief reaches the $0.5$ threshold. After the belief reaches the threshold, the attacker is threatened to take less aggressive actions, and both types of defenders share the same payoff.

Finally, we consider the double-sided incomplete information where both players' types are private information, and each player only has the belief of the other player's type. 
Compared with the defender in Fig. \ref{fig: FinalstageBlief} who takes action `\textit{selective monitoring}' with a probability less than $0.5$ and receives a decreasing expected payoff, the defender in Fig. \ref{fig: Finalstagebelief3revise} can take `\textit{selective monitoring}' with a probability closed to $1$ and receive a constant payoff in expectation after the user's belief exceeds the threshold. 
Thus, the defender can spare defense efforts and mitigate risks by
introducing uncertainties on her type as a countermeasure to the adversarial deception. 
 


\subsubsection{Sensitivity Analysis}
\label{sec: sensitiveAna}

\begin{figure}
    \centering
    \begin{subfigure}[]{0.5\textwidth}
        \centering
        \includegraphics[width=1 \textwidth]{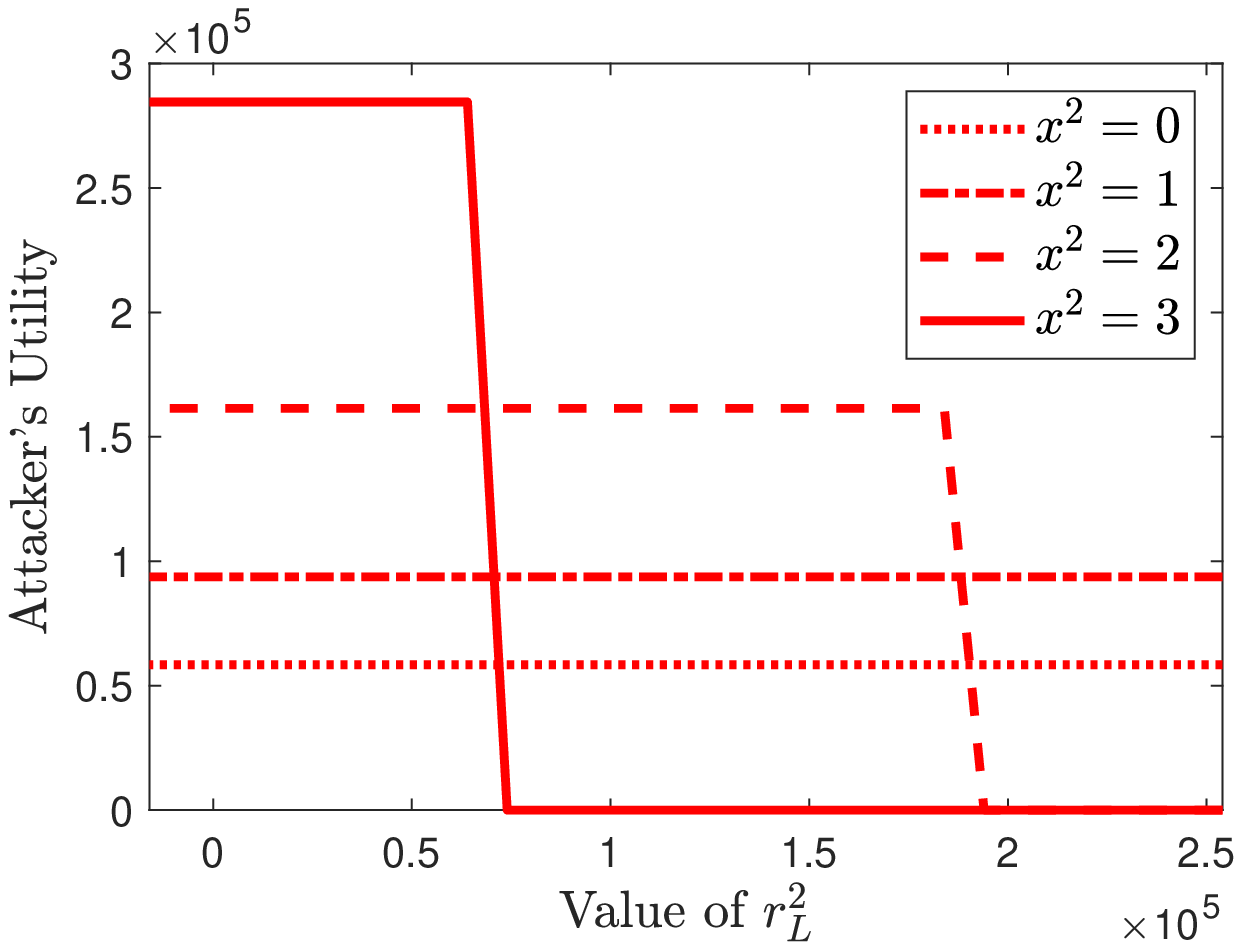}
    \end{subfigure}%
    \begin{subfigure}[]{0.5\textwidth}
        \centering
        \includegraphics[width=1 \textwidth]{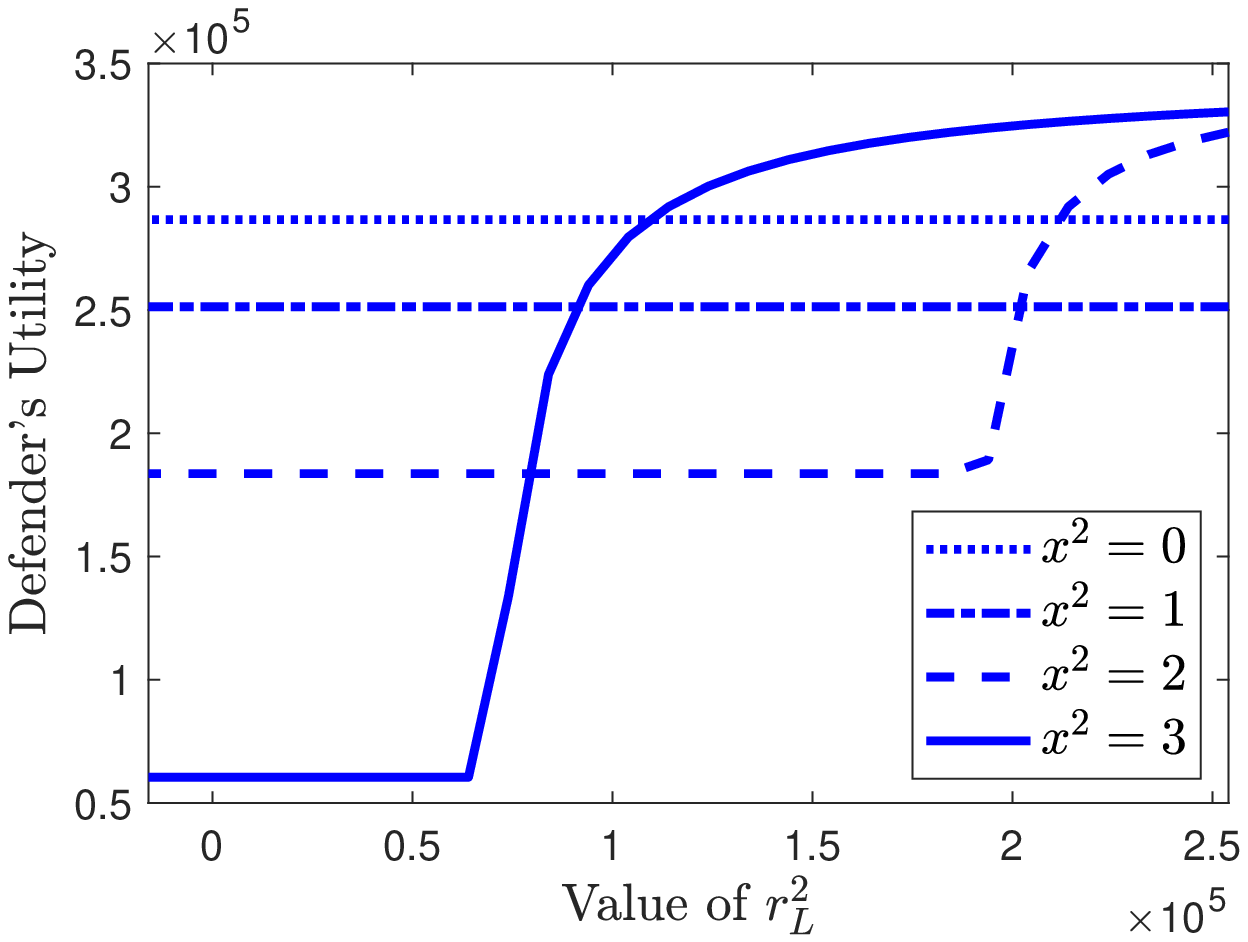}
    \end{subfigure}
   \caption{
Utilities of the primitive defender and the attacker versus the value of $r_L^2$ under different states $x^2\in \{0,1,2,3\}$. 
\label{fig: sensitivity}
}
\end{figure}

As shown in Fig. \ref{fig: sensitivity}, if the value of the penalty $r^2_L$ is close to $0$, i.e., the defense at the final stage is ineffective, then an arrival at state $x^2=3$, the highest privilege level can significantly increase the attacker's payoff and cause the most damage to the defender. 
As more effective defensive methods are employed at the final stage, i.e., the value of $r^2_L$ increases, the attacker becomes more conservative and strategic in taking adversarial behaviors. Then, the state with the highest privilege level may not be the most favorable state for the attacker. 

\begin{figure}
\centering
  \includegraphics[width=0.65\textwidth]{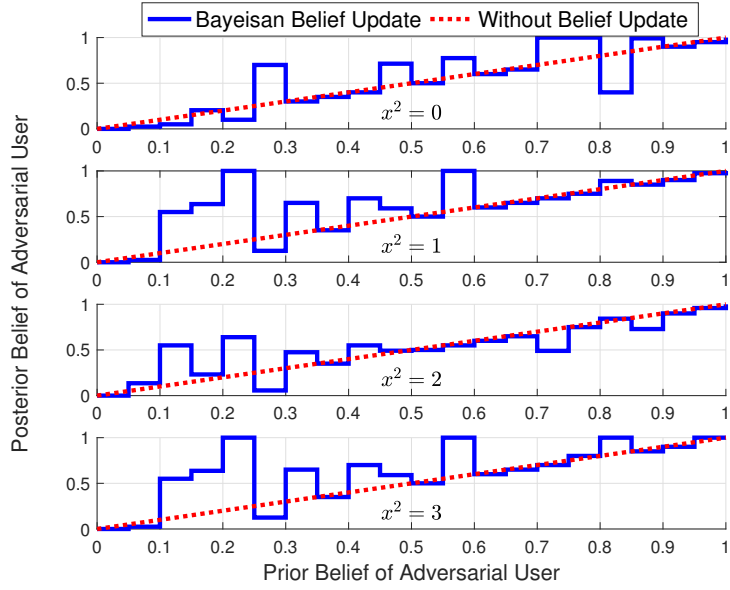}
\caption{
\label{fig: initialbelief}
The defender's prior and posterior beliefs of the user being adversarial. 
}
\end{figure}
\begin{figure}
\centering
 \includegraphics[width=0.65 \textwidth]{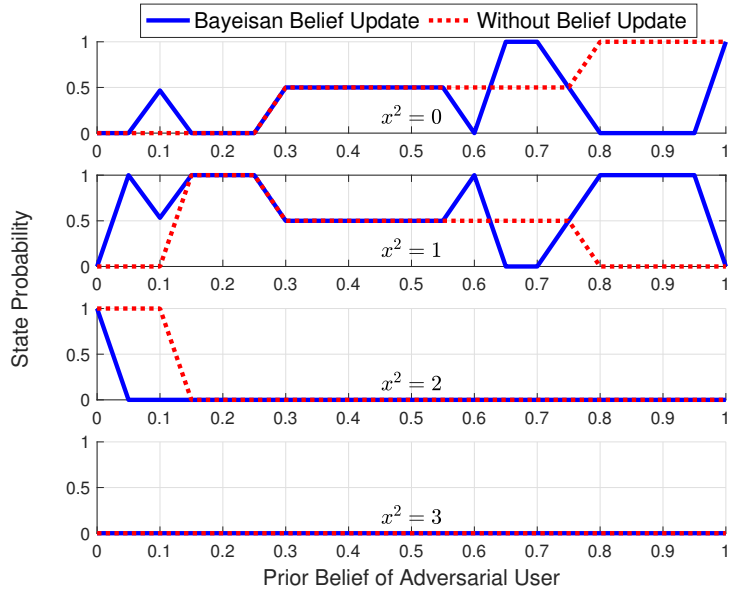}
\caption{
The probability of different states $x^2 \in\{0,1,2,3\}$. 
\label{fig: initial_statedistri}
}
\end{figure}
\begin{figure}
\centering
 \includegraphics[width=0.65 \textwidth]{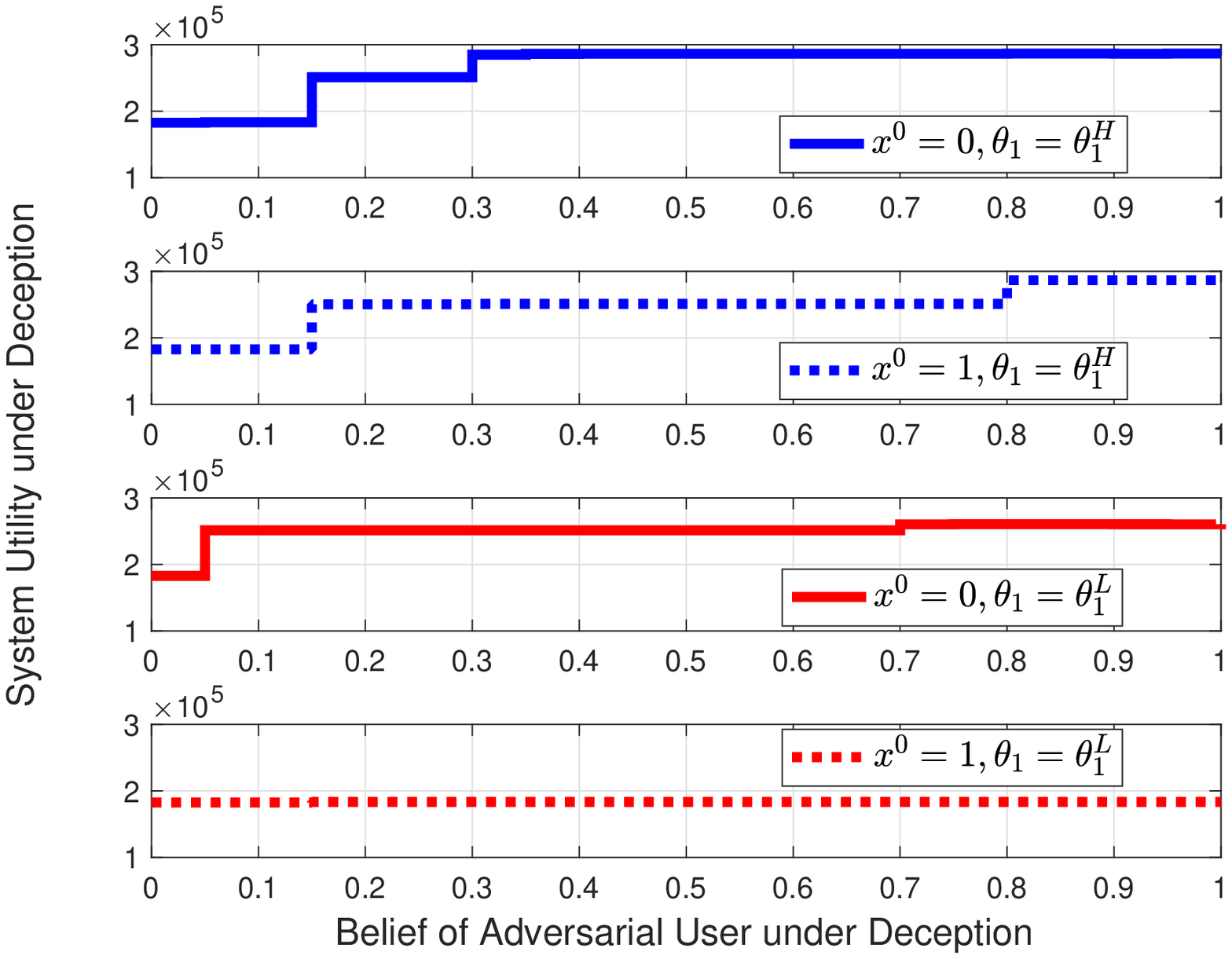}
\caption{
The defender's utility under deceived beliefs.  
\label{fig: UtilityDeception}
}
\end{figure}

\subsection{Multi-stage and PBNE}
\label{sec: multi-stage and PBNE}
We show in Fig. \ref{fig: initialbelief} that the Bayesian belief update leads to a more accurate estimate of users' types. 
Without the belief update, the posterior belief is the same as the prior belief in red and is used as the baseline. 
As the prior belief increases in the $x$-axis, the posterior belief after the Bayesian update also increases in blue. 
The blue line is in general above the red line, which means that with the Bayesian update, the defender's belief becomes closer to the right type. 
Also, we find that the belief update is the most effective when an inaccurate prior belief is used as it corrects the erroneous belief significantly. 


%

In Fig. \ref{fig: initial_statedistri}, we show that the proactive defense, i.e., defensive methods in intermediate stages can affect the state transition and reduce the probability of attackers reaching states that can result in huge damage at the final stage. 
As the prior belief of the user being adversarial increases, the attacker is more likely to arrive at state $x^2=0$ and  $x^2=1$, and reduce the probability of visiting  $x^2=2$ and  $x^2=3$. 


\subsubsection{Adversarial and Defensive Deception}
\label{sec:deception}
Fig. \ref{fig: UtilityDeception} investigates the adversarial deception where the attacker takes full control of the \textcolor{black}{defense} system and manipulates the defender's belief. 
As shown in the figure, the defender's utilities all increase when the belief under the deception approaches the correct belief that the user is adversarial. 
Also, the increase is stair-wise, i.e., the defender only alternates her policy when the manipulated belief is beyond certain thresholds. 
Under the same manipulated belief, a sophisticated defender benefits no less than a primitive one. \textcolor{black}{The defender receives a lower payoff when the reconnaissance provides effectual intelligence.} 

Incapable of \textcolor{black}{revealing} the adversarial deception \textcolor{black}{completely},  the defender  can alternatively introduce defensive deceptions, e.g., 
a primitive defender can disguise himself as a sophisticated one to confuse the attacker. Defensive deceptions introduce uncertainties to attackers, increase their costs, and increase the defender's utility. 
\begin{figure}[]
\centering
\includegraphics[width=0.8 \textwidth]{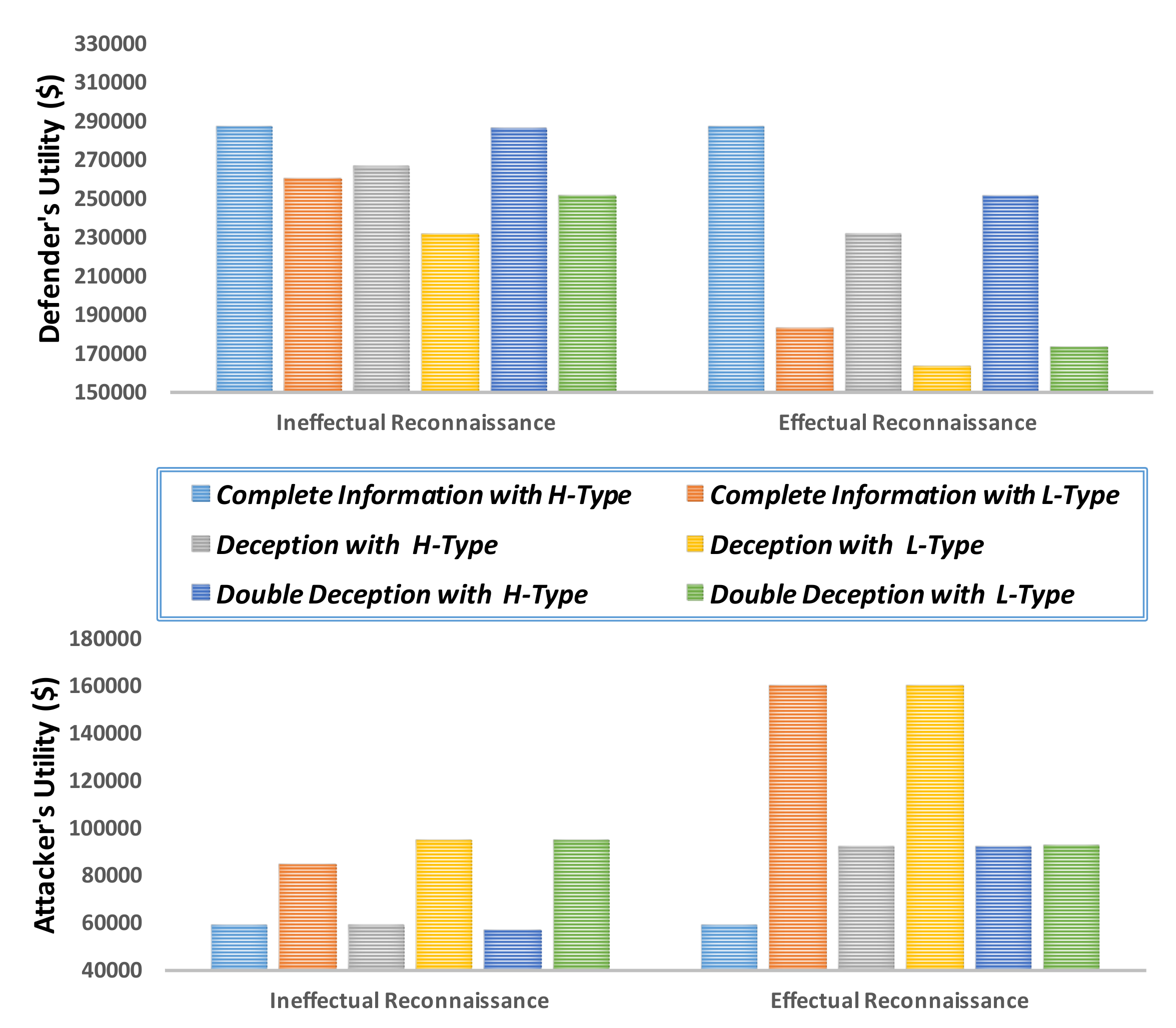}
\caption{
The cumulative utilities of the attacker and the defender under the complete information, the adversarial deception, and the defensive deception.  
In the legend, the left three represent the utilities for a sophisticated defender and the right three represent the ones for a primitive defender. 
 \label{fig: v12compare}}
\end{figure}
Fig. \ref{fig: v12compare} investigates the defender's and the attacker's utilities under three different scenarios. 
The complete information refers to the scenario where both players know the other player's type. 
The deception with the $H$-type or the $L$-type means that the attacker knows the defender's type to be sophisticated or primitive, respectively,  yet the defender has no information about the user's type. 
The double-sided deception indicates that both players do not know the other player's type. 
The results from Fig. \ref{fig: v12compare} are summarized as follows. 
First, the sophisticated defender's payoffs can increase as much as $56\%$ than those of the primitive defender. 
 Also, a prevention of effectual reconnaissance increases the defender's utility by as much as $41\%$ and reduces the attacker's utility  by as much as $38\%$. 
Second, the defender and the attacker receive the highest and the lowest payoff, respectively, under the complete information.
When the attacker introduces deceptions over his type, the attacker's utility increases and the defender's utility decreases. 
Third, when the defender adopts defensive deceptions to introduce double-sided incomplete information, we find that the decrease of the sophisticated defender's utilities is reduced by at most $64 \%$, i.e., changes from $\$55,570$ to $\$35,570$ when the reconnaissance is effectual.   
The double-sided incomplete information also brings lower utilities to the attacker than the 
one-sided adversarial deception. 
However, the defender's utility under the double-sided deception is still less than the complete information case, which concludes that acquiring complete information of the adversarial user is the most effective defense. However, if the complete information cannot be obtained, the defender can mitigate her loss by introducing defensive deceptions. 


\section{Discussions and Conclusions}
\label{sec:conclusion}
Advanced Persistent Threats (APTs) are emerging security challenges for cyber-physical systems as the attacker can stealthily enter, persistently stay in, and strategically interact with the system. 
In this work, we have developed a game-theoretic framework to design proactive and cross-layer defenses for cyber-physical systems in a holistic manner. 
Dynamic games of incomplete information have been used to capture the long-term interaction between users  and defenders who have private information unknown to the other player. 
Each player forms a belief on the unknowns and uses the Bayesian update to learn the private information and reduce uncertainty. The analysis of the Perfect Bayesian Nash Equilibrium (PBNE) has provided the defender with an effective countermeasure against the stealthy strategic attacks at multiple stages.  
To compute the PBNE of the dynamic games, we have proposed a nested algorithm that iteratively alternates between the forward belief update and the backward policy computation. 
The algorithm has been shown to quickly converge to the $\varepsilon$-PBNE that yields a consistent pair of beliefs and policies. 


Using \textcolor{black}{the} Tennessee Eastman  process as a case study of industrial control systems, we have shown that the proactive multi-stage defense in cyber networks can successfully mitigate the risk of physical attacks without reducing the payoffs of legitimate users. 
In particular, experiment results show that a sophisticated defender receives a payoff up to $56\%$ higher than a primitive defender does. Also, it has been illustrated that \textcolor{black}{by preventing effectual reconnaissance}, the defender increases her utility and reduces the attacker's utility by at most $41\%$ and $38\%$, respectively. 
On one hand, the attacker receives a higher payoff after introducing \textcolor{black}{the adversarial deception as it increases the defender's uncertainties on the user's type.} 
On the other hand, by creating uncertainties for attackers, the defender can successfully threaten them to take more conservative behaviors and become less motivated to launch attacks. 
It has been shown that the defender can significantly benefit from the mitigation of attack losses when he adopts defensive deceptions. 


\textcolor{black}{The main challenge of our approach is to identify the utility and feasible actions of defenders and users at each stage. 
One future direction to reduce the complexity of the model description is to develop mechanisms that can automate the synthesis of verifiably correct game-theoretic models. It would alleviate the workload of the system defender and operator.
Nevertheless, game theory provides a quantitative and explainable framework to design the proactive defensive response under uncertainty compared to rule-based and machine-learning-based defense methods, respectively.
Besides, the rule-based defense is static, thus an attack can circumvent it through sufficient effort. Machine learning methods require a lot of labeled data sets which may be hard to obtain in the APT scenario.
Second, we have proposed the belief to quantify the uncertainty which results from players' private types. The belief is continuously updated to reduce uncertainties and provide a probabilistic detection system as a byproduct of the APT response design.
Third, our approach enables the defender to evaluate the multi-stage impact of her defense strategies to both legitimate and adversarial users when adversarial and defensive deceptions present at the same time. 
Based on the evaluation, defenders can further find revised countermeasures and design new game rules to achieve a better tradeoff between security and usability.
}
\textcolor{black}{Our model can be broadly applied to scenarios in artificial intelligence, economy, and social science where multi-stage interactions occur between multiple agents with incomplete information. Multi-sided non-binary types can be defined based on the scenario, and our iteration algorithm of the forward belief update and the backward policy computation can be extended for efficient computations of the perfect Bayesian Nash equilibrium.}
The future work would extend the framework to an $N$-person game to characterize the simultaneous interactions among multiple users and model composition attacks. 
We would also consider scenarios where players' actions and the system state are partially observable.

%
%

\bibliographystyle{cas-model2-names} 

\bibliography{APT}


%

\end{document}